\newtheorem{theorem}{Theorem}[section]
\newtheorem{lemma}[theorem]{Lemma}
\renewcommand{\d}{\mathrm{d}}
\newcommand{\Ki}{\mathrm{k}}
\newcommand{\leftstate}{l}
\newcommand{\rightstate}{r}
\newcommand{\Kh}{K}
\newcommand{\HtwoO}{\mathrm{H}_2\mathrm{O}}
\newcommand*{\da@rightarrow}{\mathchar"0\hexnumber@\symAMSa 4B }
\newcommand*{\da@leftarrow}{\mathchar"0\hexnumber@\symAMSa 4C }
\newcommand*{\xdashrightarrow}[2][]{%
  \mathrel{%
    \mathpalette{\da@xarrow{#1}{#2}{}\da@rightarrow{\,}{}}{}%
  }%
}
\newcommand{\xdashleftarrow}[2][]{%
  \mathrel{%
    \mathpalette{\da@xarrow{#1}{#2}\da@leftarrow{}{}{\,}}{}%
  }%
}
\newcommand*{\da@xarrow}[7]{%
  \sbox0{$\ifx#7\scriptstyle\scriptscriptstyle\else\scriptstyle\fi#5#1#6\m@th$}%
  \sbox2{$\ifx#7\scriptstyle\scriptscriptstyle\else\scriptstyle\fi#5#2#6\m@th$}%
  \sbox4{$#7\dabar@\m@th$}%
  \dimen@=\wd0 %
  \ifdim\wd2 >\dimen@
    \dimen@=\wd2 %
  \fi
  \count@=2 %
  \def\da@bars{\dabar@\dabar@}%
  \@whiledim\count@\wd4<\dimen@\do{%
    \advance\count@\@ne
    \expandafter\def\expandafter\da@bars\expandafter{%
      \da@bars
      \dabar@ 
    }%
  }%
  \mathrel{#3}%
  \mathrel{%
    \mathop{\da@bars}\limits
    \ifx\\#1\\%
    \else
      _{\copy0}%
    \fi
    \ifx\\#2\\%
    \else
      ^{\copy2}%
    \fi
  }%
  \mathrel{#4}%
}
\renewcommand{\d}{\textrm{d}}
\title{A unified kinematic wave theory for melt infiltration into firn}
\author{Mohammad Afzal Shadab$^{1,2}$}
\author{Anja Rutishauser$^3$} 
\author{Cyril Grima$^2$} 
\author{Marc Andre Hesse$^4$}
\affiliation{$^1$ Oden Institute for Computational Engineering and Sciences, University of Texas at Austin, USA\\
$^2$ University of Texas Institute for Geophysics, The University of Texas at Austin, USA \\$^3$ Department of Glaciology and Climate, Geological Survey of Denmark and Greenland, Denmark\\ $^4$ Department of Earth and Planetary Sciences, Jackson School of Geosciences, University of Texas at Austin, USA}
\email[Mohammad Afzal Shadab]{mashadab@utexas.edu}
\keywords{Kinematic wave theory, infiltration, impermeable ice lens, frozen fringe, perched water table} 
\begin{document}

\begin{abstract}
Motivated by the refreezing of melt water in firn we revisit the one-dimensional percolation of liquid water and non-reactive gas in porous ice. We analyze the dynamics of infiltration in the absence of capillary forces and heat conduction to understand the coupling between advective heat and mass transport in firn. In this limit, we formulate a kinematic wave theory 
that results in a 2$\times$2-system of hyperbolic partial differential equations (PDEs) corresponding to the conservation of composition and enthalpy. For simple initial conditions (Riemann problems) this system admits self-similar solutions that illuminate the structure of melting/refreezing fronts and analytical solutions are provided for 12 basic cases of physical relevance encountered in the literature. Further we develop an extended kinematic theory that encompasses the cases when the firn saturates completely to form a perched water table governed by elliptic PDE so that the model is no longer fully hyperbolic (local). These solutions provide benchmarks for numerical models of melt infiltration into firn. They also provide insight into important physical processes such as the formation of frozen fringes, the perching of melt water on pre-existing low porosity layers and the conditions required for impermeable ice lens formation. Lastly, these analytic solutions can be utilized to improve and compare the performance of the firn hydrology, ice-sheet and Earth system models.
Our analysis provides a theoretical framework to understand these important processes in firn which affect the partitioning between meltwater infiltration and surface runoff and therefore determine the surface mass loss from ice sheets and its contribution to sea level rise.





\end{abstract}

\section{Introduction}

Modeling the infiltration of water from both rainfall and surface melting into firn is still a challenging problem \citep{firn2024firn}. A better understanding of this process is needed to improve its representation in the firn hydrology, ice sheet system and Earth system models \citep{clark2017analytical}. 
An understanding of firn processes can be developed through the analysis of simplified but physics-based models. Such models are fast and cost-efficient but they only provide approximate results. Furthermore, the analytical results for synthetic test problems are useful for the verification and validation of more complex firn hydrological models that currently show significant deviations \citep{clark2017analytical,vandecrux2020firn}. 
These analytic solutions can also help to reduce the uncertainties in the existing models and numerical algorithms such as bucket-schemes to match the field observations by introducing more physics-based constraints \citep{vandecrux2020firn,stevens2020community,ashmore2020meltwater}. 

\citet{colbeck1971one,colbeck1972theory} and \citet{colbeck1973water} spearheaded the mathematical modeling of melt infiltration into temperate snow by adapting models for two-phase flow in porous media to snow and firn \citep{Scheidegger1960,Morel-Seytoux1969}. \citet{colbeck1974capillary} demonstrates that flow due to gravity is the dominant process while capillary suction/diffusion plays a minor role. Neglecting capillary diffusion results in a non-linear kinematic wave model that describes the evolution of the water/melt saturation in the firn \citep{colbeck1972theory}. This kinematic wave model results in a hyperbolic partial differential equation that allows analytic solutions using the method of characteristics (MOC) \citep{lighthill1955kinematic}. These analytic solutions capture the main features of field data from infiltration in the Seward glacier firn on the St. Elias mountains in Canada \citep{sharp1951meltwater}. 

Later Colbeck extended this kinematic theory to analyze the effects of impermeable basal boundary \cite{colbeck1974water}, the retention of water in snow \cite{colbeck1976analysis}, studied the effects of layering and heterogeneity on melt percolation \cite{colbeck1979water,Colbeck1991} and determined the permeability of snow using lysimeter data \citep{colbeck1982permeability}. Subsequently,  \cite{singh1997kinematic} use the kinematic theory to explore the interaction of drying and wetting fronts and then propose the cases of constant, increasing, and decreasing (drainage) rainfall rates with time \cite[for review see also][]{singh1997kinematicbook}.

\cite{colbeck1976analysis} also added thermodynamics to the kinematic theory to study infiltration into cold firn that requires refreezing of the melt at the wetting front. He shows that the retardation of the wetting front is relatively minor in snow because the latent heat of fusion is large. Further, \cite{clark2017analytical} develop analytic solutions for the kinematic model that describe the interaction of the wetting and drying fronts in the cold and temperate snow, following similar solutions for soils \citep{Charbeneau1984}. They provide both vertical profiles of water saturation in the snow as well as the outflow hydrograph and present detailed comparisons with numerical solutions. 


Variably saturated flows due to infiltration in firn have not been considered in the kinematic wave models. The conventional kinematic wave theory for infiltration fails in a fully-saturated region as the model equation is not valid anymore. Therefore, the kinematic wave theory needs an extension to capture fully-saturated regions. Perched water tables may likely form when the low porosity, initial or refrozen (also known as frozen fringe \citep{miller1972freezing}) medium is unable to accommodate the volumetric flux of water. \cite{shadab2022analysis} proposes a kinematic wave theory for transitional infiltration in heterogeneous temperate soil (or firn) which leads to a rising perched water table. It is a crucial pre-requisite to understand the formation of frozen fringe or ice lens since the porosity may drop enough to cause complete saturation. 

{Recently, there has been a consistent effort to find ice lenses (<10 cm thick), layers (10 cm-1 m thick) and slabs (>1 m thick) in firn from radar based measurements \citep{culberg2021extreme}, firn core studies \citep{samimi2020meltwater} and satellite data \citep{miller2022empirical}.} These impermeable barriers greatly affect the surface mass balance by making the lower firn inaccessible to the meltwater and therefore, increase the runoff especially in the percolation zone \citep{tc-9-1203-2015,Machguth2016,culberg2021extreme}. However, there is still a dearth of literature on understanding the mechanism of their formation and evolution. 
A better understanding of the meltwater infiltration process will help constrain its ultimate partitioning into refreeze, liquid storage, and runoff \citep{meyer2017continuum}. Additionally, this improvement will facilitate the development of more accurate physics-based models and enable comparison with existing firn hydrology models that currently exhibit significant deviations \citep{vandecrux2020firn}. This comparison will be achieved through verification and validation against analytic solutions for simple benchmark problems.

{Therefore in this paper we propose a unified kinematic wave theory for the infiltration of liquid water and gas in porous ice in the limit of negligible capillary forces by coupling the mass and enthalpy balances. We extend the theory to account for the formation and evolution of perched water table where hyperbolic nature of PDEs fails. It can help construct simple analytical solutions which can prove to be useful for infiltration in temperate as well as cold firn where the medium can also form fully saturated regions possibly due to the formation/existence of a frozen fringe. Finally, we derive conditions for an impermeable ice lens formation due to meltwater advection and refreezing. The remainder of this paper is divided into four sections. Section \ref{sec:model-formulation} presents the model formulation. Section \ref{sec:melt-transport} considers the problem of melt transport across a discontinuity and documents twelve nature-inspired cases with their analytical solutions. Section \ref{sec:validation} applies this theory to study a multilayered firn leading to formation of a perched firn aquifer and validates it with the numerical solution. Finally, Section \ref{sec:conclusion} concludes the paper.}

\section{Continuum Model Formulation} \label{sec:model-formulation}
We consider a one-dimensional transport problem of three phase, two component system comprising of water (H$_2$O) in liquid ($w$) and solid ice ($i$) phases along with non-reactive gas phase ($g$). Ice is considered to be a porous material when its volume fraction $\phi_i\in (0,1)$, where volume fraction $\phi_\alpha$ is the ratio of volume (m$^3$) of the phase $\alpha \in \{w,i,g \}$ to the representative elemental volume (m$^3$). In this section, we first define the conserved quantities, then introduce the governing equations and constitutive models, and finally provide the resulting dimensionless continuum model. The related assumptions will be introduced in this work when required.

\subsection{Conserved Quantities}
The first conserved variable is the  water ($\HtwoO$) composition (kg/m$^3$), $C$, defined as the total mass of water components per unit representative elemental volume (REV) as
\begin{align} \label{eq:C-def-mole-fraction}
    C := \rho_i \phi_i X_{\HtwoO,i} + \rho_w \phi_w X_{\HtwoO,w} + \rho_g \phi_g X_{\HtwoO,g}
\end{align}
where $\rho_\alpha$ is the density (kg/m$^3$), $\phi_\alpha$ refers to volume fraction of the phase $\alpha \in \{w,i,g \}$. The symbol $X_{\HtwoO,\alpha}$ represents the mole fraction of $\HtwoO$ component present in phase $\alpha$. Assuming that water vapor component is negligible in the gas phase ($X_{\HtwoO,g}=0$) and moreover ice and water phases are considered pure, i.e., $X_{\HtwoO,i}= X_{\HtwoO,w}=1$, we can simplify Equation \eqref{eq:C-def-mole-fraction} as 

\begin{align} \label{eq:C-working-def}
    C = \rho_i \phi_i + \rho_w \phi_w.
\end{align}
Assuming same density for ice and water, $\rho_i\approx \rho_w = \rho$, Equation \eqref{eq:C-working-def} further simplifies to
\begin{align} \label{eq:C-simplified-def}
    C = \rho(\phi_i +\phi_w ).
\end{align}
Next, the volume constraint gives
\begin{align}
    \phi_i+\phi_w+\phi_g&=1 \quad \textrm{and} \label{eq:vol-constraint}\\
    \Rightarrow C &= \rho (1-\phi_g) \label{eq:C-gas}.
\end{align}

The second conserved variable is related to the thermodynamics of the system. Since phase change is involved when ice melts, temperature does not accurately represent all states of the system considered \citep{aschwanden2012enthalpy,alexiades2018mathematical}. Therefore the second conserved variable is chosen to be the enthalpy of the system (J/m$^3$), $H$, which is defined as
\begin{align} \label{eq:5new}
    H &:= \rho_i \phi_i h_i(T) + \rho_w \phi_w h_w(T) + \rho_g \phi_g h_g(T),
\end{align}
where $h_\alpha$ is the specific enthalpy (J/kg) of the phase $\alpha$, which is a piecewise linear function of temperature (K), $T$. For simplicity we fix the reference enthalpy at the solidus to be $H=0$ where the system is at the melting temperature, $T=T_m$. The specific enthalpy $h_\alpha$ of each phase $\alpha \in \{w,i,g\}$ can then be defined as

\begin{align}
    h_i(T) &= \begin{cases}c_{p,i}(T-T_m),& T < T_m \quad (\textrm{or } H<0) \\0, & T \geq T_m \quad(\textrm{or } H \geq 0) \end{cases}, \label{eq:sp-enthalpy-ice} \\
    h_w(T) &= \begin{cases}0,& T < T_m \quad(\textrm{or } H<0) \\c_{p,w}(T-T_m)+L, & T \geq T_m \quad(\textrm{or } H\geq0)\end{cases} \label{eq:specific-enthalpy-formulation} \quad \text{and} \\ 
    h_g(T) &= c_{p,g}(T-T_m).
\end{align}
Here $c_{p,\alpha}$ is the specific heat capacity at constant pressure (J/kg$\cdot$K) for phase $\alpha$, $T_m$ is the melting temperature (K) and $L$ is the latent heat of fusion of water (J/kg). The density and specific heat capacity of gas are much lower than those of liquid water or ice, i.e., $\rho_g \ll \rho$ and $c_{p,g}<c_{p,i}$ or $c_{p,w}$ (see Table \ref{table:1}). It helps make a simplification that the gas phase contribution to the total enthalpy of the system is negligible. After substituting Equations \eqref{eq:sp-enthalpy-ice} and \eqref{eq:specific-enthalpy-formulation} in Equation \eqref{eq:5new}, $H$ can be ultimately formulated as

\begin{align} \label{eq:enthalpy-def}
    H &= \begin{cases}\rho c_{p,i} \phi_i (T-T_m), & {T < T_m} \quad (\textrm{or } H \leq 0) \\  \rho \phi_w L,  &{T= T_m} \quad (\textrm{or } 0 < H < CL) \\ \rho \phi_w \left(c_{p,w} (T-T_m) + L \right), &{T> T_m} \quad (\textrm{or } H \geq C L) \end{cases}.
\end{align}
Here the maximum enthalpy limit for the three-phase region is $CL$ because it is the enthalpy of the system at liquidus (see Figure \ref{fig:combined-variables}\emph{a}). The boundaries of the three-phase region are not included in the region $0 < H<CL$ because it strictly refers to the three phase region. However, no ice (no liquid water) state can exist at liquidus (solidus), which occurs at $T=T_m$. From the above formulation we classify three regions, where region 1 ($H \leq 0$) comprises of ice and gas, region 2 ($0 < H<CL$) contains all three phases and region 3 ($H \geq CL$) corresponds to no-matrix state consisting of only water and gas phases. This classification of regions will be referred to in the subsequent sections of this paper. In reality, the solid skeleton breaks down when porosity is less than a critical threshold and the two-phase medium becomes a slurry of non-contiguous solid grains suspended in liquid \citep{katz2022dynamics}.

The temperature and volume fractions of water, ice and gas phases can be evaluated from composition, $C$, and enthalpy, $H$, as shown in Figures~\ref{fig:combined-variables}\emph{a}-\ref{fig:combined-variables}\emph{d} respectively. As shown in Equation \eqref{eq:C-gas}, the volume fraction of gas, $\phi_g$, only depends on the composition, $C$, as shown in Figure \ref{fig:combined-variables}\emph{d}. Next we formulate the governing equations for this model corresponding to the two conserved variables.

\begin{figure}
    \centering
    \includegraphics[width=\linewidth]{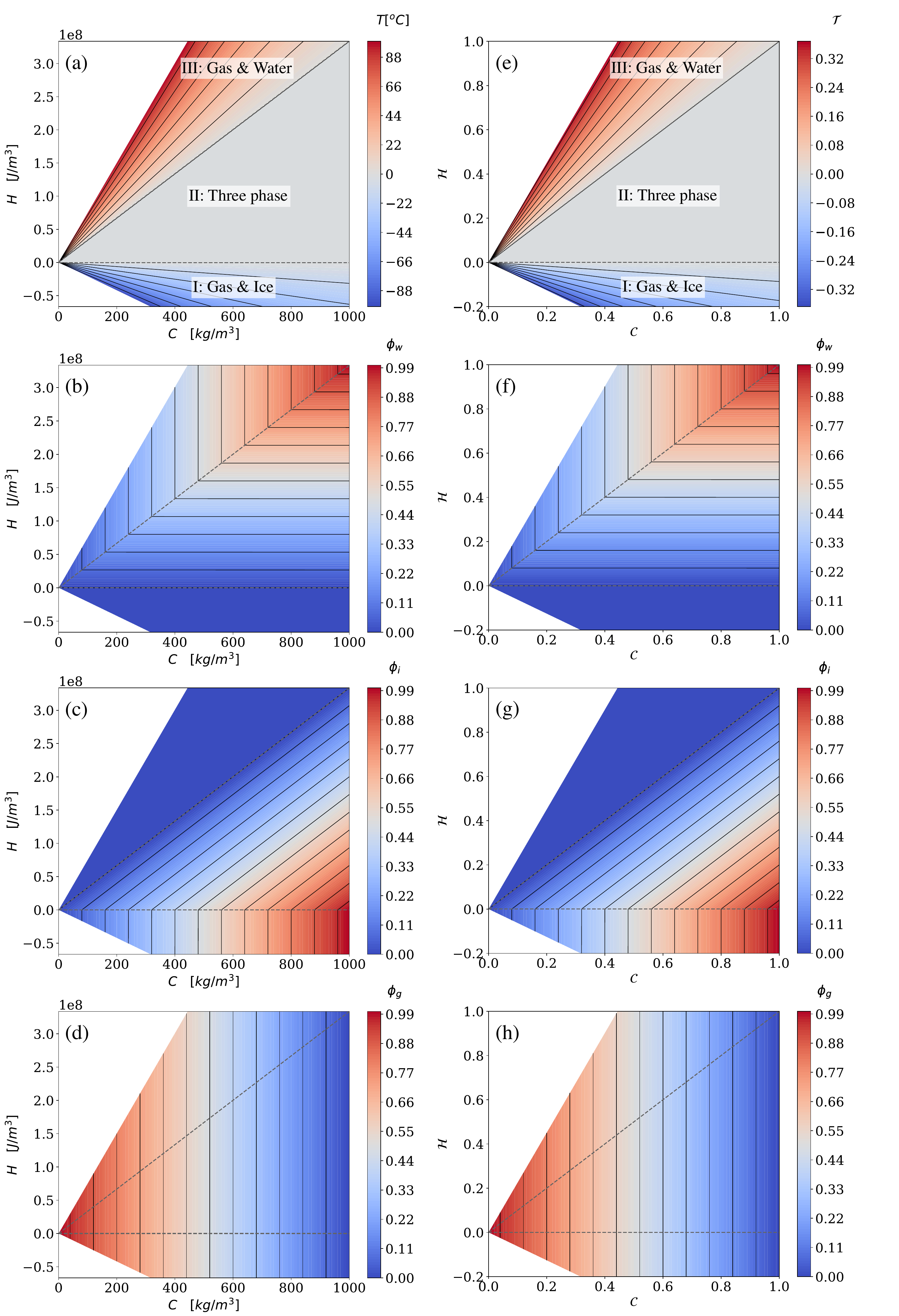}
    \vspace{-0.4cm}
    \caption{The dependence of temperature and volume fractions on dimensional and dimensionless enthalpy and composition, ($C,H$) and ($\mathcal{C},\mathcal{H}$) respectively. Dimensional $C,H$: (a) temperature and volume fractions of (b) water, (c) ice and (d) gas phases. Dimensionless $\mathcal{C},\mathcal{H}$: (e) scaled temperature and volume fractions of (f) water, (g) ice and (h) gas phases. The contours are restricted to $T\in[-100^\circ\textrm{C},100^\circ\textrm{C}]$ to avoid phase change at boiling as well as keep the contour levels consistent. Solid black lines are the level-sets whereas the dashed lines show the boundaries of the regions, i.e., solidus ($H=0$ or $\mathcal{H}=0$) and liquidus ($H=CL$ or $\mathcal{H}=\mathcal{C}$).}
    \label{fig:combined-variables}
\end{figure}

\subsection{Transport Model}
In a reference frame moving with ice, the conservation equations for water composition and system enthalpy are respectively given as

\begin{align}
    \frac{\partial C}{\partial t} + \nabla \cdot (\textbf{q} \rho X_{\HtwoO,w}+\textbf{q}_g \rho_g X_{\HtwoO,g}-\phi \rho D \nabla X_{\HtwoO,w}) &= 0 \quad \textrm{and} \label{eq:comp-conservation}\\
    \frac{\partial H}{\partial t} + \nabla \cdot (\textbf{q} \rho h_w - \overline{\kappa} \nabla T) &= 0, \label{eq:enthalpy-conservation}
\end{align}
where $\textbf{q}$ ($\textbf{q}_g$) is the volumetric flux of water (gas) phase (m$^3$/m$^2\cdot$s) relative to ice phase. Diffusion in the solid has been neglected in Equation \eqref{eq:comp-conservation}. Additionally, $D$ is the effective hydrodynamic dispersion of the porous medium (m$^2$/s). The effective thermal conductivity of the mixture (W/m$\cdot$K), $\overline{\kappa}$, is defined as a volume fraction weighted sum of thermal conductivities (W/m$\cdot$K), $\kappa_\alpha$, of the individual phases $\alpha$ as

    \begin{equation}
        \overline{\kappa} = \kappa_w \phi_w + \kappa_i \phi_i + \kappa_g \phi_g, 
    \end{equation}
where gas component can again be neglected since gas is relatively a thermal insulator as $\kappa_g \ll \kappa_w$ or $\kappa_i$ (see Table \ref{table:1}). Therefore, the effective thermal conductivity can be simplified as

\begin{equation}
        \overline{\kappa} = \kappa_w \phi_w + \kappa_i \phi_i. \label{eq:13}
    \end{equation}
Since $X_{\HtwoO,g}=0$ and $X_{\HtwoO,w}=1$, Equation \eqref{eq:comp-conservation} can be simplified as  

\begin{align}
    \frac{\partial C}{\partial t} + \nabla \cdot (\textbf{q} \rho) &= 0 \label{eq:comp-conservation-final}
\end{align}
\subsection{Constitutive Relations}
The volumetric flux of water relative to ice, $\textbf{q}$, can be written using extended Darcy's law,

\begin{align} \label{eq:darcy-full}
    \textbf{q} = - \frac{\Ki(\varphi) k_{rw}(s)}{\mu} (\nabla p - \rho \textbf{g})
\end{align}
where $\Ki$ is the absolute permeability (m$^2$) which is a function of porosity $\varphi$ $(\varphi=\phi_w+\phi_g =1-\phi_i)$, the ratio of void volume to the bulk volume, $p$ is water pressure (Pa), $\mu$ is the viscosity of water (Pa$\cdot$s) and $\textbf{g}$ is the acceleration due to gravity vector (m$^2$/s). The relative permeability for multi-phase flow, $k_{rw}$, display complex hysteresis \citep{blunt2017multiphase}, but here we only consider the simplest case with power law dependence. The relative permeability of water phase, $k_{rw}$, is a function of the water saturation, $s$, which is the ratio of water phase volume to void volume $\left(s= \frac{\phi_w}{1-\phi_i}\right)$.  We assume that the water phase becomes immobile below a certain residual water saturation, $s_{w r}$. Similarly the gas phase becomes immobile below the residual gas saturation, $s_{gr}$. As a result, the two-phase fluid flow of both gas and water phases is restricted to regions where $s_{wr}<s<1-s_{gr}$. We will refer to regions with $s=1-s_{gr}$ as saturated in the remainder of this paper. The residual water saturation during drainage comes out to be $s_{w r}\sim0.07$ from lysimeter \citep{colbeck1976analysis} and calorimeter \citep{coleou1998irreducible} techniques. However, as the ice is water-wet with a near-zero contact angle at the ice-water-air interface \citep{knight1971experiments}, the residual water saturation during saturation rise (imbibition) is zero due to hysteresis in the relative permeability-capillary pressure curve \citep{carlson1981simulation,blunt2017multiphase}. The phenomenon of hysteresis has been neglected in the firn hydrology literature but it will affect the speeds of the meltwater fronts.

Next we assume the problem is gravity dominated in the unsaturated mediums such that the spatial variations in the difference of fluid phases' pressures (capillary pressure) is negligible at the problem length scales \citep{colbeck1972theory}. See \cite{Smith1983,shadab2022analysis,shadab2022hyperbolic} for a more detailed discussion on neglecting the pressure term in context of soils using scaling analysis. As a result, the pressure of the water phase in the unsaturated regions becomes a constant, equal to the reference gas pressure, i.e., $p=0$ \citep{colbeck1972theory,shadab2022analysis}. Plugging it in Equation \eqref{eq:darcy-full} eliminates the diffusive, pressure term. The volumetric flux of water, $\textbf{q}$, finally takes the gravity-driven form

\begin{align} \label{eq:darcy-law-simplified}
      \textbf{q} =  \frac{\Ki(\varphi) k_{rw}(s)}{\mu}  \rho \textbf{g}.
\end{align}
The absolute permeability of ice (m$^2$), $\Ki$, and the relative permeability of water, $k_{rw}$, are assumed to be power laws \citep{kozeny1927uber,carman1937fluid,brooks1964hydraulic,bear2013dynamics,meyer2017continuum} defined as

\begin{align}
\Ki(\varphi) &=\Ki_0 \varphi^m = \Ki_0 (1-\phi_i)^m , \label{eq:abs-perm-def} \\
k_{rw}(s) &= k^0_{rw} s^n = k^0_{rw} \left( \frac{\phi_w}{1-\phi_i}\right)^n , \label{eq:rel-perm-def}
\end{align}
where $\Ki_0$ is a model constant, which can be considered as an absolute permeability (m$^2$) when there no ice matrix, and $k^0_{rw}$ is end point relative permeability of water phase. Here we have also assumed that the residual saturations of both water and gas phases are zero, i.e., $s_{wr}=s_{gr}=0$. It will provide accurate speeds for the wetting fronts moving into dry firn, due to hysteresis in the relative permeability. Plugging Equations \eqref{eq:abs-perm-def} and \eqref{eq:rel-perm-def} in Equation \eqref{eq:darcy-law-simplified} finally gives

\begin{align}\label{eq:darcy-law-final}
      \textbf{q}(\phi_i,\phi_w) = \begin{cases}
      \textbf{0}, &H\leq 0 \\ \frac{\Ki_0 k^0_{rw}}{\mu}  \rho g (1-\phi_i)^m \left( \frac{\phi_w}{1-\phi_i}\right)^n \hat{\textbf{g}} = K_h (1-\phi_i)^m \left( \frac{\phi_w}{1-\phi_i}\right)^n \hat{\textbf{g}}, & 0<H<CL \textrm{ \& } {C<\rho}
      \end{cases}
\end{align}
where the acceleration due to gravity vector is $\textbf{g}=g\hat{\textbf{g}}$ with $\hat{\textbf{g}}$ being the unit vector in the direction of gravity. Moreover, $\textbf{0}$ is a zero vector in three-dimensions, i.e., $\textbf{0}=\left[ 0,0,0\right]^T$. The symbol $K_h = \frac{\Ki_0 k^0_{rw}}{\mu}\rho g$ is a model constant equivalent to the saturated hydraulic conductivity of water at unity porosity (m/s). It is a known constant which can be considered as the maximum gravity-dependent volumetric flux of water, $|\textbf{q}|$, at unity porosity. Note that the dynamics at unity porosity is not Darcy-type as the constitutive relationships (\ref{eq:darcy-full}-\ref{eq:abs-perm-def}) are only valid for a porous medium. Therefore we will restrict our analysis to the porous media where $\varphi < 1$.

Next we will non-dimensionalize the model in order to make it scale independent and also find dominant terms which govern the crucial physics of the problem.

\subsection{Scaling}
The model is scaled using dimensionless variables for composition, $\mathcal{C}$, enthalpy, $\mathcal{H}$, temperature, $\mathcal{T}$, depth, $\zeta$, and time, $\tau$, which are defined as

\begin{equation} \label{eq:20}
    \mathcal{C} = \frac{C}{\rho}, \mathcal{H} = \frac{H}{\rho L}, \hspace{1mm}\mathcal{T}=\frac{T-T_m}{T_m},\hspace{1mm} \zeta = \frac{z}{\delta}, \hspace{1mm}\text{ and }\hspace{1mm} \tau = \frac{t K_h}{\delta}.
\end{equation}
Here the spatial coordinates (for example, the depth coordinate $z$) are non-dimensionalized by length scale of heterogeneity or the REV scale of the problem, $\delta$. Time variable is scaled by the shortest time of water seepage across the characteristic length through a medium with unity porosity, i.e., $\delta /K_h$. The definitions of conserved quantities $C$ and $H$, given in Equations (\ref{eq:C-simplified-def}) and (\ref{eq:enthalpy-def}) respectively, thus transform into the dimensionless forms

\begin{align}
    \mathcal{C} &= \phi_i+\phi_w = 1-\phi_g , \label{eq:dimensionless-composition-formulation}\\
    \mathcal{H} &= \begin{cases} \mathcal{C} \hspace{0.5mm} \mathrm{Ste} \hspace{0.5mm} \mathfrak{c}_{p,r}  \mathcal{T} , & \mathcal{H} \leq 0 \\
      \phi_w, &0< \mathcal{H}< \mathcal{C} \\
    \mathcal{C} (\mathrm{Ste} \hspace{0.3mm} \mathcal{T} + 1), & \mathcal{H} \geq \mathcal{C} \end{cases}  \label{eq:dimensionless-enthalpy-formulation},
\end{align}
where Ste is the Stefan number defined as ratio of sensible heat of water at melting temperature to the latent heat of fusion of $\HtwoO$, i.e., Ste$=c_{p,w} T_m / L$ and $\mathfrak{c}_{p,r}=c_{p,i}/c_{p,w}$ is the ratio of specific heat of ice to that of water. From the formulations of dimensionless enthalpy \eqref{eq:dimensionless-enthalpy-formulation} and dimensional specific enthalpy of water phase \eqref{eq:specific-enthalpy-formulation}, the dimensionless temperature, $\mathcal{T}$, and dimensionless specific enthalpy of water phase, $\mathfrak{h}_w=h_w/L$, can be derived as

\begin{align}
    \mathcal{T}(\mathcal{C},\mathcal{H}) &= \begin{cases} \frac{\mathcal{H}}{ \mathcal{C} \hspace{0.3mm} \textrm{Ste} \hspace{0.5mm \mathfrak{c}_{p,r}}}, &{\mathcal{H} \leq 0} \\
    0, &0<\mathcal{H}< \mathcal{C}\\
    \frac{1}{\textrm{Ste}}\left(\frac{\mathcal{H}}{\mathcal{C}}-1 \right), &{\mathcal{H} \geq \mathcal{C}}
    \end{cases}, \qquad \textrm{ and } \quad
     \mathfrak{h}_w (\mathcal{C},\mathcal{H}) = \begin{cases} 0,& \mathcal{H} \leq 0 \\ 1 , & 0 < \mathcal{H}<\mathcal{C}\\ \frac{\mathcal{H}}{\mathcal{C}}, & \mathcal{H} \geq \mathcal{C} \\\end{cases}. \label{eq:1.78hw}
\end{align}
Subsequently the volume fractions of the phases, $\phi_\alpha$, and the porosity of the medium, $\varphi$, can be rewritten as functions of $\mathcal{C}$ and $\mathcal{H}$ as

\begin{align}
     \phi_w(\mathcal{C},\mathcal{H}) &= \begin{cases}0,& \mathcal{H} \leq 0 \\ \mathcal{H} , & 0 < \mathcal{H}<\mathcal{C}\\ \mathcal{C}, & \mathcal{H} \geq \mathcal{C} \\\end{cases}, \quad
     \phi_i(\mathcal{C},\mathcal{H}) = \begin{cases}\mathcal{C},& \mathcal{H} \leq 0 \\ \mathcal{C}-\mathcal{H} , & 0 < \mathcal{H}<\mathcal{C}\\ 0, & \mathcal{H} \geq \mathcal{C} \\\end{cases}, \quad
     \phi_g(\mathcal{C}) = 1 -  \mathcal{C} \label{eq:1.78phig} \\
     \textrm{ and } \quad \varphi(\mathcal{C},\mathcal{H}) &= \begin{cases}1-\mathcal{C},& \mathcal{H} \leq 0 \\ 1-\mathcal{C}+\mathcal{H} , & 0 < \mathcal{H}<\mathcal{C}\\ 1, & \mathcal{H} \geq \mathcal{C} \\\end{cases}.   \label{eq:27}
\end{align}

The scaled temperature and volume fractions of water, ice and gas phases can be evaluated from dimensionless composition, $\mathcal{C}$, and dimensionless enthalpy, $\mathcal{H}$, as shown in Figures \ref{fig:combined-variables}\emph{e}-\ref{fig:combined-variables}\emph{h} respectively. As shown in Equation \eqref{eq:C-gas}, the volume fraction of gas, $\phi_g$, only depends on the dimensionless composition as illustrated in Figure \ref{fig:combined-variables}\emph{h}.

The composition and enthalpy transport equations (\ref{eq:comp-conservation-final} and \ref{eq:enthalpy-conservation}) thus take the dimensionless form
\begin{align}
      \frac{\partial \mathcal{C}}{\partial \tau} + \nabla \cdot \left((1-\phi_i)^m \left( \frac{\phi_w}{1-\phi_i}\right)^n \hat{\textbf{g}} \right) = 0, \label{eq:dimless-comp-conservation-final} \\
     \frac{\partial \mathcal{H} }{\partial \tau} + \nabla \cdot \left(\mathfrak{h}_w (1-\phi_i)^m \left( \frac{\phi_w}{1-\phi_i}\right)^n \hat{\textbf{g}} -  ( \phi_w + \kappa \phi_i) \frac{\textrm{Ste}}{Pe_{\mathcal{H}}}\nabla \mathcal{T} \right) = 0. \label{eq:dimless-enthalpy-conservation-final}
\end{align}
Here $\kappa= \kappa_i/\kappa_w$ is the ratio of the heat conductivities of ice to water and $\alpha_T={\kappa_w}/{\rho c_{p,w}}$ (m$^2$/s) is the thermal diffusivity of water. The ratio of heat convected to heat diffused is defined as the Peclet number for enthalpy equation, $Pe_{\mathcal{H}}=K_h \delta/\alpha_T$. Moreover, the divergence and gradient operators are now scaled with inverse of characteristic depth, $1/\delta$.

Using thermodynamic and fluid flow parameters from Table \ref{table:1}, the value of Peclet number, $Pe_{\mathcal{H}}$, comes out to be 3758.6 for $\delta=1$ m. This value indicates a three orders of magnitude higher heat convection compared to heat diffusion for gravity-driven infiltration in firn. Therefore, we can neglect the second order heat conduction term in Equation \eqref{eq:dimless-enthalpy-conservation-final}, which is also the necessary condition in the three phase region ($0 < \mathcal{H}<\mathcal{C}$ or $\mathcal{T}=0$) as $\nabla \mathcal{T} = 0$. Assuming local thermodynamic equilibrium (LTE), in the limit $Pe_{\mathcal{H}}\to \infty$, the system of dimensionless governing equations (\ref{eq:dimless-comp-conservation-final} and \ref{eq:dimless-enthalpy-conservation-final}) then reduces to quasi-linear system of coupled hyperbolic equations,
\begin{align} \label{eq:dimless-system-gov-eqs}
    \frac{\partial \textbf{u}}{\partial \tau} + \nabla \cdot \textit{\textbf{f}}(\textbf{u})= \textbf{0}
\end{align}
where $\textbf{u} = [\mathcal{C},\mathcal{H} ]^T$ is the vector of dimensionless conserved variables and $\textit{\textbf{f}}(\textbf{u}) = [ \textit{\textbf{f}}_\mathcal{C},\textit{\textbf{f}}_\mathcal{H}]^T$ is the vector of their corresponding nonlinear flux vectors. Here the flux vector functions for the dimensionless composition and enthalpy are given as 

\begin{align} \label{eq:dimless-flux-CH-vector}
    \textit{\textbf{f}}:=\textit{\textbf{f}}_{\mathcal{C}} =  \textit{\textbf{f}}_{\mathcal{H}} = \begin{cases} \textbf{0},& \mathcal{H} \leq 0, \\ (1-\mathcal{C}+\mathcal{H})^m \left( \frac{ \mathcal{H}}{1-\mathcal{C}+\mathcal{H}}\right)^n \hat{\textbf{g}}, & 0 < \mathcal{H}<\mathcal{C}\end{cases}.
\end{align}

\begin{figure}
\centering
  \includegraphics[width=0.5\linewidth,trim = 0 0 0 0cm,clip]{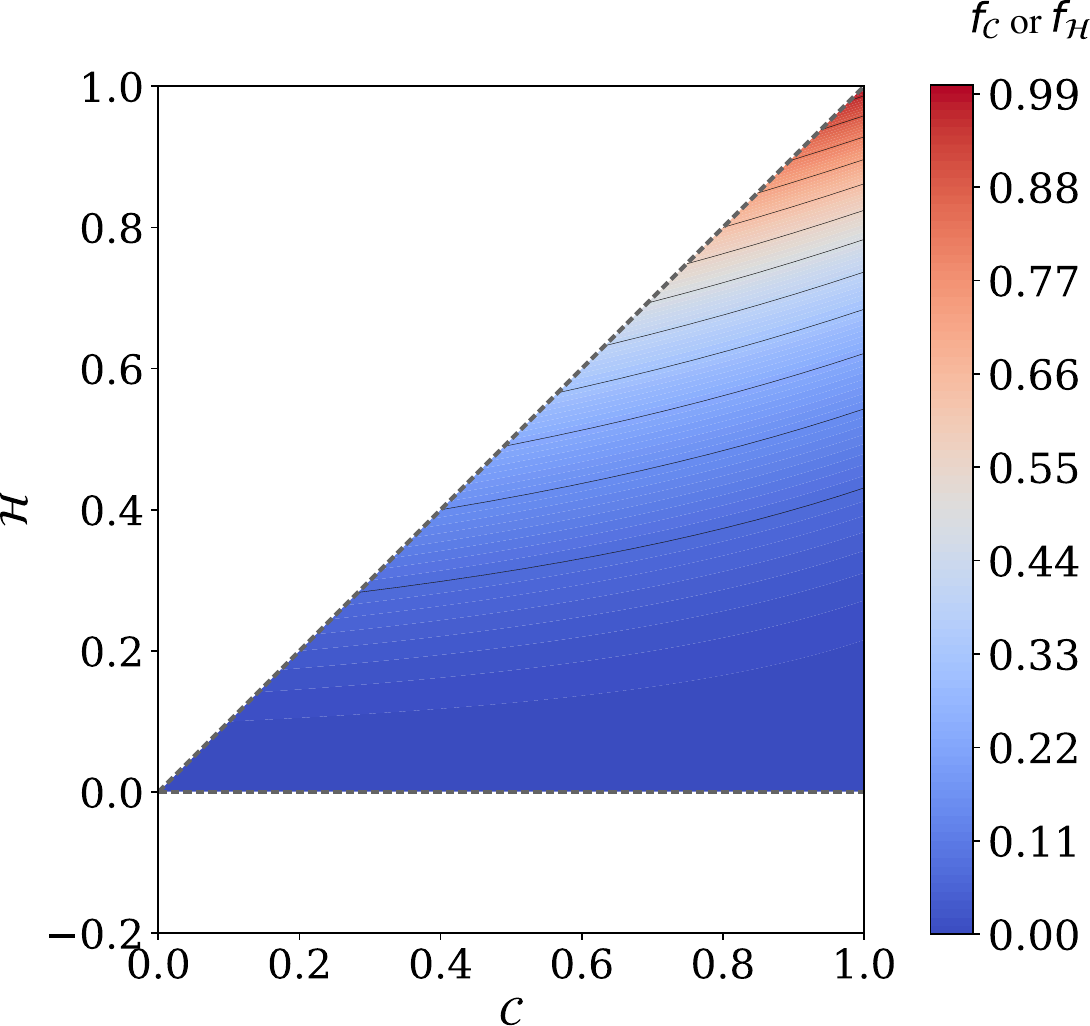}
\caption{The dimensionless flux of composition or enthalpy in $\mathcal{C}\mathcal{H}$ phase space for $m=3$ and $n=2$. In region 1 consisting of water and gas region ($\mathcal{H} \leq 0$) as well as region 3 comprising of three phase region ($0 < \mathcal{H}<\mathcal{C}$), the flux of dimensionless enthalpy and composition are identical, i.e., $f_\mathcal{C}=f_\mathcal{H}$. Region 3 with water and gas ($\mathcal{H}\geq \mathcal{C}$) is not considered in the present work.}
\label{fig:fluxes}
\end{figure}

The above analysis shows the distinct system behaviors in the different regions based on fluxes (see Figure \ref{fig:fluxes}). In this work, region 3 with no solid matrix is not considered since the constitutive relation for volumetric flux (Darcy's law) is not valid anymore. From Equation \eqref{eq:dimless-flux-CH-vector}, it can be observed that the fluxes of enthalpy and composition in the system of governing equations \eqref{eq:dimless-system-gov-eqs} are identical in regions 1 and 2 defined by the symbol $\textit{\textbf{f}}$ for brevity. This is the result of scaling owing to the fact that the composition changes only when water infiltrates or convects while carrying the enthalpy in form of latent heat (and specific heat) along with it.

In the next section, we will consider a simple problem of melt transport across a discontinuity to utilize the method of characteristics (MOC) for solving the system of hyperbolic partial differential equations \citep{lighthill1955kinematic,leveque1992numerical} given in Equation \eqref{eq:dimless-system-gov-eqs}.

\begin{table}
\caption{A summary of simplified thermodynamic properties as well as flow properties of water in porous ice used in present work}
\centering
\begin{tabular}{lcc}
\toprule
 Parameter & Value & Units   \\
 \midrule
 $\rho$& 1000 & kg/m$^3$ \\
 $c_{p,w}$& 4186 &  J/(kg K)  \\
  $c_{p,i}$& 2106.1 & J/(kg K)    \\
$\kappa_w$&0.606& W/(m K)\\
$\kappa_i$&2.25& W/(m K)\\
$L$&333.55& kJ/kg\\
$T_m$&273.16& K\\
$\alpha_T$&1.45$\times 10^{-7}$& m$^2$/s\\
$\mathfrak{c}_{p,r}$&0.503&-\\
Ste&3.428&-\\
 \hline
  $\Ki_0$& $5.56\times10^{-11}$ \citep{meyer2017continuum} & m$^2$ \\
 $k_{rw}^0$& 1.0 &  -  \\
  $m$& 3.0 &  -  \\
   $n$& 2.0 &  -  \\
  $g$& 9.81 &m/s$^2$    \\
$\mu$&$10^{-3}$&Pa s\\
$K_h$& $5.45 \times 10^{-4}$ &m/s\\
\bottomrule
\end{tabular}\label{table:1}
\end{table}

\section{Melt transport across a discontinuity}\label{sec:melt-transport}

\begin{figure}
    \centering
    \includegraphics[width=0.8\linewidth]{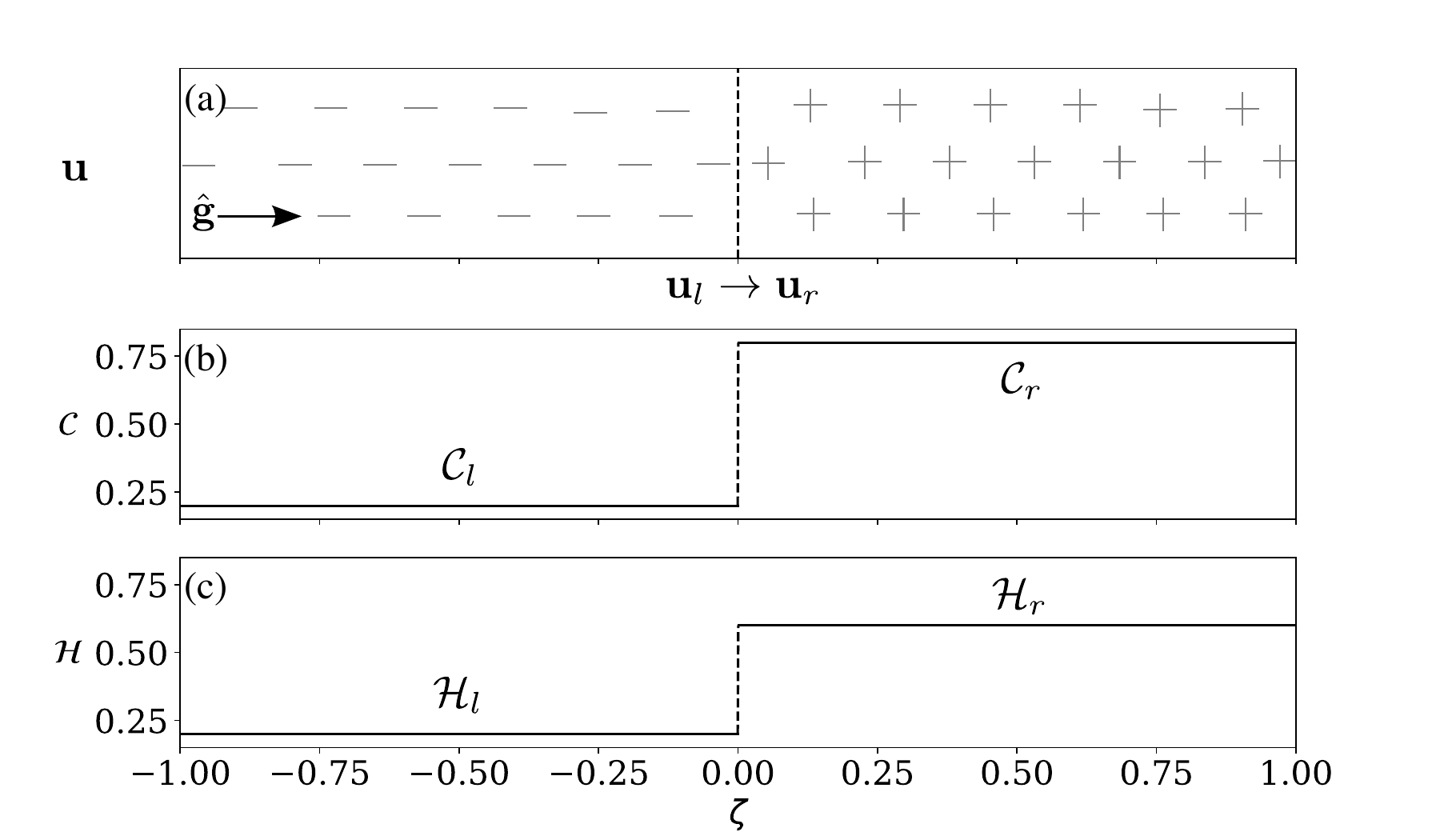}
    \caption{One-dimensional Riemann problem: (a) Schematic representation of $\textbf{u}$ across a discontinuity within or at the boundary of a porous firn. Initial conditions for the Riemann problem for conserved variables (b) $\mathcal{C}$ and (c) $\mathcal{H}$ plotted against dimensionless depth coordinate, $\zeta$.}
    \label{fig:representative-figure}
\end{figure}

This section considers the reaction front arising from melt flow across a discontinuity in dimensionless composition and enthalpy at a depth, say $\zeta = 0$, as shown in Figure \ref{fig:representative-figure}. The crucial dynamics of such problems can be understood using hyperbolic analysis of the coupled system of partial differential equations in one dimension \citep{lighthill1955kinematic,leveque1992numerical,venkatraman2014analytical,jordan2015reactive,ghaderi2019theory}.

\subsection{General structure of reaction fronts}
Consider the following one-dimensional initial value problem with two constant states, known as Riemann problem. See \cite{leveque1992numerical} for a pedagogical introduction to Riemann problems and their analysis. Let the spatial dimension be the direction of gravity, $\hat{\textbf{g}}$, which aligns with the [dimensionless] depth coordinate, $z$ [$\zeta$]. In that case, the dimensionless flux vector \eqref{eq:dimless-flux-CH-vector} reduces to

\begin{align} \label{eq:dimless-flux-CH}
   f: = \begin{cases} 0,& \mathcal{H} \leq 0 \\ (1-\mathcal{C}+\mathcal{H})^m \left( \frac{ \mathcal{H}}{1-\mathcal{C}+\mathcal{H}}\right)^n, & 0 < \mathcal{H}<\mathcal{C}\end{cases}
\end{align}

The system of dimensionless composition and enthalpy conservation equations \eqref{eq:dimless-system-gov-eqs} can be written in one-dimensional depth coordinates, $\zeta$, as

\begin{align} \label{eq:dimless-system-gov-eqs-1D}
    \textbf{u}_{\tau} + \textbf{f}(\textbf{u})_{\zeta} = \textbf{0}, \quad \tau \in \mathbb{R}^{+}, \quad \zeta \in \mathbb{R},
\end{align}
with initial conditions

\begin{align}
    \textbf{u}=\begin{cases} \textbf{u}_\leftstate, &\zeta <0 \\  \textbf{u}_\rightstate, &\zeta >0 \end{cases}
\end{align}
where the flux vector in $\zeta$ direction is $\textbf{f}(\textbf{u})=[f,f]^T$ and the subscripts $\tau$ and $\zeta$ refer to the partial derivatives with respect to the dimensionless time and depth respectively. The symbol $\mathbb{R}$ refers to the field of real numbers $(-\infty,\infty)$ whereas $\mathbb{R}^+$ refers to the set of positive real numbers $(0,\infty)$. The subscripts $\leftstate$ and $\rightstate$ refer to the state of the system on the left and right sides of a discontinuity. An example of an initial discontinuity with left state, $\textbf{u}_l$, and right state, $\textbf{u}_r$, is shown in Figures \ref{fig:representative-figure}\emph{b} and \ref{fig:representative-figure}\emph{c}. The flow is towards the direction of gravity, assumed to be in $+  \zeta$-direction. The solution to the Riemann problem for well-behaved systems of two coupled nonlinear partial differential equations is characterized by the formation of an intermediate state, $\textbf{u}_i$, bounded by two waves $\mathscr{W}_1$ and $\mathscr{W}_2$ \citep{leveque1992numerical}. This solution structure, observed in Figure \ref{fig:Riemann-solution-schematic}, can be represented as

\begin{align}
 \textbf{u}_\leftstate\xrightarrow{\mathscr{W}_1}\textbf{u}_i \xrightarrow{\mathscr{W}_2}\textbf{u}_\rightstate 
\end{align}
\begin{figure}
    \centering
    \includegraphics[width=0.7\linewidth]{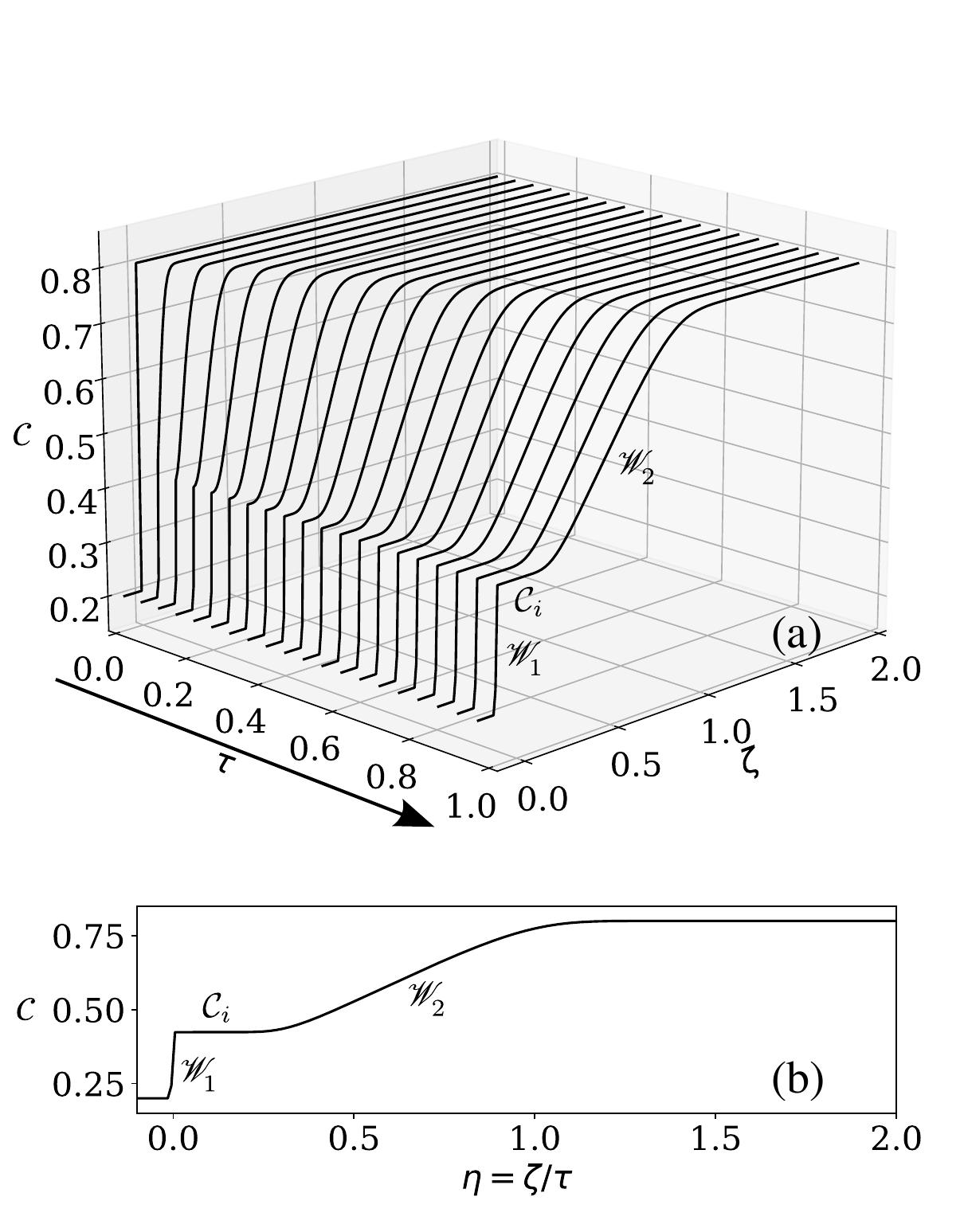}
    \caption{Solution of the Riemann problem introduced in Figure \ref{fig:representative-figure}. (a) Evolution of dimensionless composition, $\mathcal{C}$, in space, $\zeta$, and time, $\tau$. (b) The same self-similar solution plotted as a function of similarity variable $\eta=\zeta/\tau$.}
    \label{fig:Riemann-solution-schematic}
\end{figure}
In the context of reactive meltwater transport, the waves $\mathscr{W}_1$ and $\mathscr{W}_2$ are the reaction fronts and the intermediate state, $\textbf{u}_i$, corresponds to a state between the fronts. The system \eqref{eq:dimless-system-gov-eqs-1D} can be recasted into a quasilinear form using chain rule as

\begin{align} \label{eq:dimless-system-linearized}
    {\textbf{u}}_\tau + \nabla_{\textbf{u}} \textbf{f}(\textbf{u}) \hspace{0.5mm} \textbf{u}_\zeta = \textbf{0}
\end{align}
where $\nabla_{\textbf{u}} \textbf{f}(\textbf{u})$ is the gradient of flux, $\textbf{f}(\textbf{u})$, with respect to the conserved variables, $\textbf{u}$, which takes the matrix form

\begin{equation} \label{eq:dimless-Jacobian}
    \nabla_{\textbf{u}} \textbf{f}(\textbf{u}) = \begin{bmatrix} f_{,\mathcal{C}} & f_{,\mathcal{H}} \\ f_{,\mathcal{C}} & f_{,\mathcal{H}} \end{bmatrix}
\end{equation}
where the subscripts $,\mathcal{C}$ and $,\mathcal{H}$ refer to the partial derivatives with respect to dimensionless composition and enthalpy respectively. The derivatives of the flux gradient above can be evaluated explicitly, which are given in Appendix A. The system of advection equations \eqref{eq:dimless-system-linearized} results in waves (fronts) propagating with their characteristic velocities. These fronts have self-similar stretching patterns, because of their own characteristic velocities given by the flux gradient. 

\subsection{Self-similarity of reaction fronts}

The recognition of the constant stretching morphology of the reaction fronts from an initial jump condition allows the introduction of the similarity variable

\begin{equation}
    \eta = \frac{\zeta}{\tau}
\end{equation}
Physically, $\eta$ describes the dimensionless propagation velocity of the reaction front. The solution generally collapses into a single profile when plotted as a function of $\eta$ (see Figure \ref{fig:Riemann-solution-schematic}\textit{b}). Therefore, the system of partial differential equations \eqref{eq:dimless-system-linearized} can be transformed into a system of ordinary differential equations by considering the nonlinear eigenvalue problem 

\begin{align} \label{eq:37}
     (\textbf{A} - \lambda_p \textbf{I})~\textbf{r}_p = \textbf{0}, \quad p\in \{ 1,2 \} 
\end{align}
where the flux gradient is $\textbf{A}=\nabla_{\textbf{u}} \textbf{f}(\textbf{u})$ and the eigenvector is $\textbf{r}_p=\d \textbf{u}/\d \eta$ corresponding to the eigenvalue $\lambda_p$. Here the eigenvalues $\lambda_1$ and $\lambda_2$ are the characteristic propagation speeds of the waves $\mathscr{W}_1$ and $\mathscr{W}_2$ respectively. The associated eigenvectors, $\textbf{r}_p=\d \textbf{u}/\d \eta$, give the pathways through $\mathcal{C}\mathcal{H}$ plane, also referred to as hodograph plane, that satisfy the conservation equations (see Figure \ref{fig:3}\textit{b} for example).
Any constant state \textbf{u} that satisfies $\d \textbf{u}/\d \eta=0$ satisfies the conservation law \eqref{eq:dimless-system-linearized} trivially. Continuous solutions to Equation \eqref{eq:37} satisfy the conservation equations \eqref{eq:dimless-system-linearized}.

The opposite case is of the self-sharpening wave front instead of self-stretching as the information that propagates through the characteristics crosses each other \citep{leveque1992numerical}. Here the continuous solution fails and a moving or a stationary discontinuity appears where the local state changes abruptly. The propagation velocity of this discontinuity in the solution satisfies the Rankine-Hugoniot (R-H) jump condition \citep{leveque1992numerical}, which is derived from the discrete conservation, in this case of both mass and enthalpy, around the discontinuity as

\begin{align} \label{eq:RH-condition}
    \Lambda_{\mathscr{S}} (\textbf{u}_+ ,\textbf{u}_-) = \frac{\textbf{f}(\textbf{u}_+)-\textbf{f}(\textbf{u}_-)}{\textbf{u}_+-\textbf{u}_-} = \frac{[\textbf{f}(\textbf{u})]}{[\textbf{u}]}
\end{align}
where $\Lambda_\mathscr{S}$ is the shock speed and $[\,.\,]$ refers to the jump condition across the shock and subscripts $+$ and $-$ refer to the state on the left and right sides of a shock wave. Note that the left ($l$) and right ($r$) states might not necessarily be the left ($-$) and right ($+$) sides of a shock front, due to the presence of intermediate state(s). 

\subsection{Construction of solution in $\mathcal{C}\mathcal{H}$ hodograph plane}
The self-similar solutions are constructed by identifying directions in $\mathcal{C}\mathcal{H}$ hodograph plane that satisfy conservation laws and the equation of state. One such direction allows a continuous variation in \textbf{u}, which can be found by integrating the eigenvectors of the flux gradient. Another set of direction is determined by the nonlinear algebraic system of equations arising from R-H jump condition \eqref{eq:RH-condition}, described by shock fronts. 
Firstly we consider a system in the disjointed fashion where both left state, $\textbf{u}_\leftstate=[\mathcal{C}_\leftstate,\mathcal{H}_\leftstate]^T$, and right state, $\textbf{u}_\rightstate=[\mathcal{C}_\rightstate,\mathcal{H}_\rightstate]^T$, reside in the same region. Then we investigate more complicated cases where left and right states can lie in different regions. Lastly we discuss the cases where a fully-saturated region forms, leading to the failure of the current hyperbolic PDE analysis. This theory is then further extended to analyze the formation and evolution of fully-saturated regions which are governed by a different, elliptic PDE \citep{shadab2022analysis}. {Although there can be at most one moving wave for simple cases where the medium remains unsaturated ($C<\rho$, $\mathcal{C}<1$ or $\phi_g>0$), there can be two moving waves when a fully-saturated region appears, i.e., $\mathcal{C}=1$.}

\subsubsection{\textbf{Region 2 only (Three-phase region)}}
Region 2 ($0<\mathcal{H}<\mathcal{C}$) consists of all three-phases, which is relevant for temperate glaciers where $\mathcal{T}=0$ or $T=T_m$. In three-phase region, the eigenvalues of the flux gradient \eqref{eq:dimless-Jacobian} are 

    \begin{align}\label{eq:Evals-region2}
        \lambda_1 = 0 \quad \textrm{and} \quad \lambda_2 = f_{,\mathcal{C}} + f_{,\mathcal{H}}=n\mathcal{H}^{n-1} (1-\mathcal{C}+\mathcal{H})^{m-n}.
    \end{align}
 where the subscripts ${\mathcal{C}}$ and ${\mathcal{H}}$ refer to the partial derivatives with respect to the corresponding conserved variable.

\begin{figure}
\centering 
  \includegraphics[width=\linewidth]{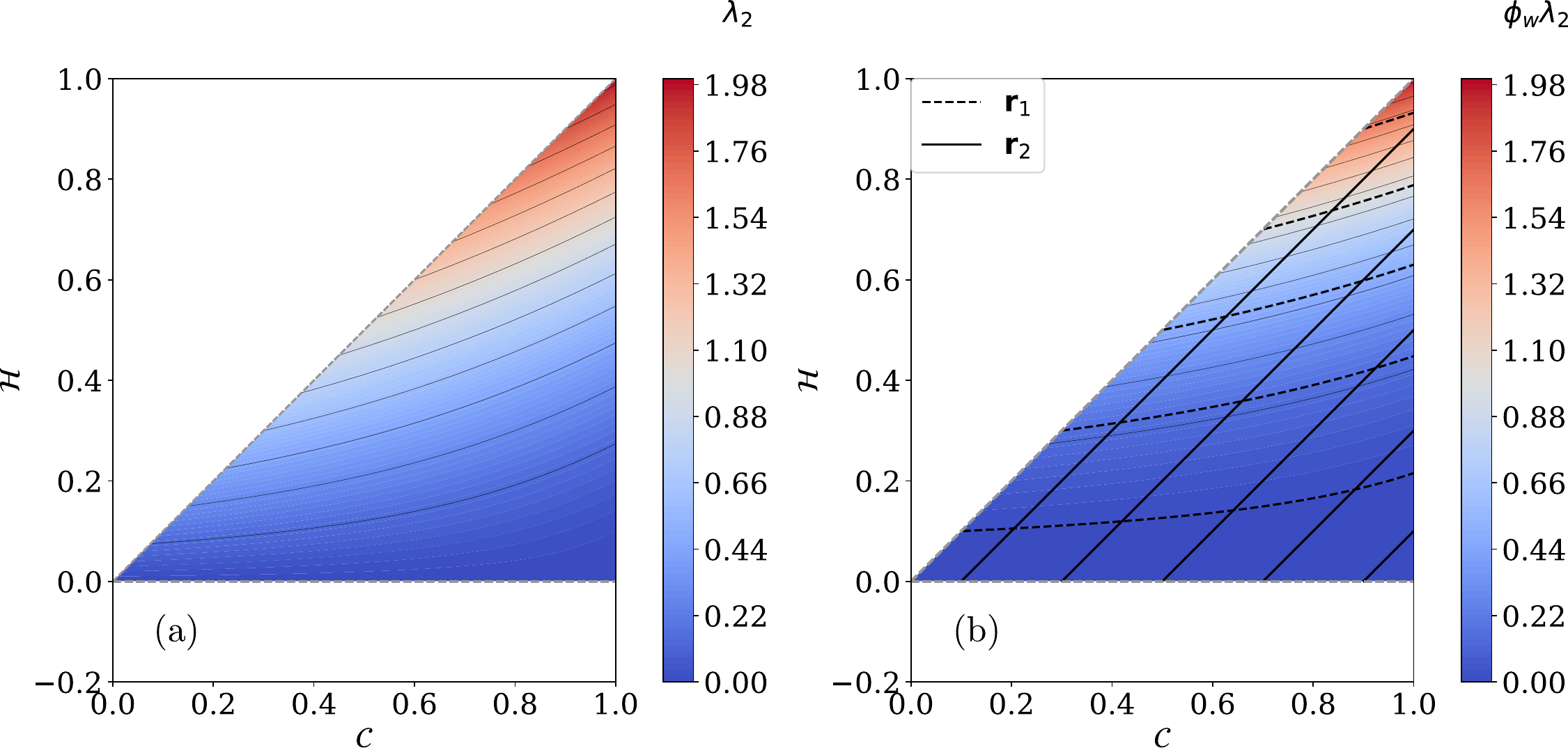}
\caption{Contour plots of (a) second eigenvalue, $\lambda_2$, and (b) propagation velocity of second front with respect to the melt given by $\phi_w\lambda_2$ in hodograph plane ($\mathcal{C}\mathcal{H}$ phase space) for $m=3$ and $n=2$. The slow path $\textbf{r}_1$ and fast path $\textbf{r}_2$ are shown with dashed and solid lines respectively in panel \textit{b}.}
\label{fig:3}
\end{figure}

 The first reaction front is a stationary contact discontinuity as $\lambda_1=0$. The eigenvalue $\lambda_2$ gives a dimensionless propagation speed of the second reaction front $\mathscr{W}_2$ as a function of $\mathcal{C}$ and $\mathcal{H}$, as plotted in Figure~\ref{fig:3}\emph{a} for $m=3$ and $n=2$. Therefore, all  solutions governed by hyperbolic PDEs \eqref{eq:dimless-system-gov-eqs-1D} will have at most a single moving reaction front. Due to variable porosity, the dimensionless system is scaled with respect to the largest saturated hydraulic conductivity, $K_h$, which corresponds to the volumetric flux of water (Darcy's flux), rather than the melt velocity. Hence, the propagation speed of the second reaction front relative to the melt is $\phi_w \lambda_2$, as shown in Figure \ref{fig:3}\emph{b}. As expected, the propagation velocity $\phi_w \lambda_2$ has significantly retarded near the solidus, as shown by more separated contours. These eigenvalues yield two corresponding, linearly independent eigenvectors in $\mathcal{C}\mathcal{H}$ hodograph plane given by 

    \begin{align}
        \textbf{r}_1 = \begin{bmatrix} \frac{-f_{,\mathcal{H}} }{f_{,\mathcal{C}} } \\ 1 \end{bmatrix} = \begin{bmatrix} \frac{(n (1-\mathcal{C})+m\mathcal{H})}{(m-n)\mathcal{H}} \\ 1 \end{bmatrix} \quad \textrm{and} \quad
        \textbf{r}_2 = \begin{bmatrix} \frac{f_{,\mathcal{H}} }{f_{,\mathcal{H}} } \\ 1 \end{bmatrix} = \begin{bmatrix} 1 \\ 1 \end{bmatrix}.
    \end{align}
These eigenvectors can be used to find the integral curves using the system of ODEs

\begin{align}
\frac{\d \textbf{u}}{\d \eta} &=  \frac{1}{\nabla_{\textbf{u}} \lambda_p \cdot \textbf{r}_p} \textbf{r}_p \label{eq:ODE-system}  \end{align}
which can be further integrated to obtain the solution pathways $\textbf{u}(\textbf{u}_0,\eta) $ as

\begin{align}
\textbf{u}(\textbf{u}_0,\eta) &= \textbf{u}_0 + \int_{\lambda_p(\textbf{u}_0)}^\eta \frac{1}{\nabla_{\textbf{u}} \lambda_p \cdot \textbf{r}_p} \textbf{r}_p ~\d \eta' \label{eq:ODE-system-solution}.
    \end{align}
    
These paths in $\mathcal{C}\mathcal{H}$ hodograph plane comprise the set of states that can be connected to the initial state $\textbf{u}_0$ by a reaction front with a continuous variation in \textbf{u}. In three phase region, the family of integral curves corresponding to first eigenvector $\textbf{r}_1$ is referred to as slow path as $\lambda_1 <\lambda_2$ and is given by

   \begin{align} 
        \mathcal{C}&=1+ \mathcal{H}+\mathfrak{C} \mathcal{H}^{-\frac{n}{m-n}} \label{eq:slow-path}.
    \end{align}
    where $\mathfrak{C}$ is the constant of integration, which can be found for the initial point $\textbf{u}_0$. The slow path lines are the same as constant flux lines, as shown in Figure \ref{fig:3}\textit{b}. Next, the family of integral curves corresponding to the second eigenvector is known as fast path and is given by
    
    \begin{align}
        \mathcal{C} &= \mathcal{H} + \mathfrak{C} .\label{eq:fast-path}
    \end{align}
 The speed of second characteristic $\lambda_2$ is non-negative and increases monotonically in the direction of integral curves corresponding to the second eigenvector $\textbf{r}_2$ (fast paths) in three-phase region for $n>1$ (see Lemma Appendix 1.1 and Figure \ref{fig:3}\emph{b}). Additionally, the slow path corresponds to constant flux contours as $\lambda_1=0$ and the fast path corresponds to constant porosity, $\varphi$, contours (Lemma Appendix 1.2).

\subsubsection*{Solutions in three-phase region}
We will now discuss the different scenarios involving three-phase region only. The discussion below assumes that $\textbf{u}_0$ is the left state, $\textbf{u}_\leftstate$, and describes the set of permissible right states $\textbf{u}_\rightstate$.

\paragraph{{(a.) Stationary linear reaction front (Case I):}} The integral curves associated with $\lambda_1$ and $\textbf{r}_1$, known as the first characteristic field ($\lambda_1,\textbf{r}_1$), are constant flux lines in the $\mathcal{C}\mathcal{H}$ phase space (or hodograph plane). Any right state $\textbf{u}_\rightstate$, along the integral curve, connected to $\textbf{u}_0$ by a stationary discontinuity is a weak solution of Equation \eqref{eq:dimless-system-linearized}. Because $\lambda_1=0$, the first wave is a stationary contact discontinuity $\mathscr{C}_1$. The fluxes of $\mathcal{C}$ and $\mathcal{H}$ on both sides are the same so that the melt transport does not change $\mathcal{C}$ and $\mathcal{H}$ and the front does not evolve. So, a contact discontinuity is the solution for the left and right states lying on the slow path \eqref{eq:slow-path} (constant flux lines) satisfying

\begin{align} \label{eq:stationary-linear-rxn-front}
    \frac{\varphi_\leftstate}{\varphi_\rightstate}&= \frac{1 - \mathcal{C}_\leftstate +\mathcal{H}_\leftstate}{1 - \mathcal{C}_\rightstate + \mathcal{H}_\rightstate} = \left(\frac{\mathcal{H}_\leftstate}{\mathcal{H}_\rightstate}\right)^{-\frac{n}{m-n}} \quad \textrm{or} \quad f(\textbf{u}_l)=f(\textbf{u}_r).
\end{align}

The final solution in this case takes the form 

\begin{align}\label{eq:contact-disc-only}
    \textbf{u} &= \begin{cases}\textbf{u}_l, \quad & \zeta < 0 \\  \textbf{u}_r, \quad & \zeta>0\end{cases}.
\end{align} 
Figures \ref{fig:contact_shock_rarefaction}\textit{a}-\textit{c} (green color) illustrate an example of a system that results in a stationary contact discontinuity. This system corresponds to a steady meltwater flux of $f=0.112$ inside temperate porous firn with a jump in porosity from 70\% at $\zeta<0$ to 55.3\% in $\zeta>0$ (coarse-to-fine transition in firn) leading to liquid water contents of $0.4$ and $0.45$ respectively in these regions. These values correspond to  $\textbf{u}_l=[\mathcal{C}_l,\mathcal{H}_l]^T=[0.7,0.4]^T$ and right state $\textbf{u}_r=[0.897,0.45]^T$. The resulting system illustrated by the volume fractions of the three phases highlights a porosity jump in the temperate firn that has a constant steady meltwater flux on both sides, as shown in Figures \ref{fig:contact_shock_rarefaction}\textit{d}-\textit{f} at different times. This case is summarized in Table \ref{table:2}.

\begin{figure}
    \centering
    \hspace*{-2cm}
    \includegraphics[width=1.2\linewidth]{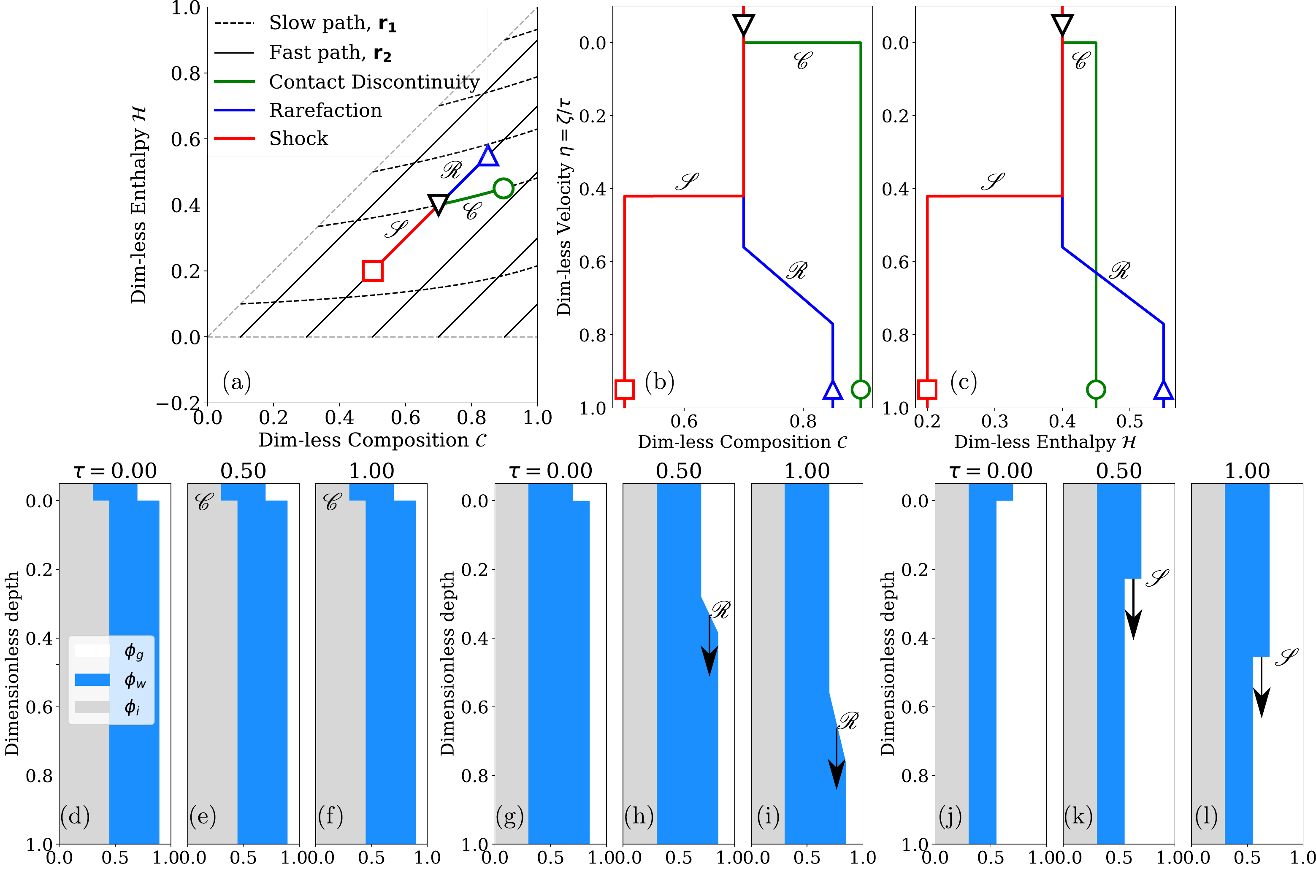}
    \caption{The simple solutions of Riemann problem leading to a contact discontinuity $\mathscr{C}$ (green), rarefaction $\mathscr{R}$ (blue), and shock waves $\mathscr{S}$ (red). (a) Construction of solution in hodograph plane and their corresponding self-similar analytical solutions for (b) dimensionless composition and (c) dimensionless enthalpy with dimensionless velocity $\eta$. The evolution of the volume fractions of the three phases in the system for different configurations at dimensionless times $\tau=0,0.5,1.0$: (d-f) Case I - contact discontinuity, (g-i) Case II - drying front/rarefaction wave, and (j-l) Case III - wetting front/shock wave.}
    \label{fig:contact_shock_rarefaction}
\end{figure}

\paragraph{{(b.) Moving nonlinear reaction front:}} The integral curves associated with the second characteristic field ($\lambda_2,\textbf{r}_2$) are the constant porosity $\varphi$ contours. Thus, a nonlinear characteristic wave is the solution for the left and right states lying on the fast path \eqref{eq:fast-path} satisfying

\begin{align}\label{eq:47}
    \mathcal{C}_\rightstate&=\mathcal{C}_\leftstate + (
    \mathcal{H}_\rightstate-\mathcal{H}_\leftstate).
\end{align} 

\subparagraph{(i.) Rarefaction wave (Case II):} If the characteristic speed $\lambda_2$ varies smoothly from left to right state, any right state $\textbf{u}$ along the fast path is connected to left state $\textbf{u}_0$ by a continuously varying reaction front (Figure \ref{fig:contact_shock_rarefaction}\textit{a}, blue lines). The propagation velocity, $\lambda_2$, along these continuous reaction fronts increases monotonically such that the reaction front spreads with time. These self-smoothening drying fronts are referred to as rarefaction waves, denoted by the symbol $\mathscr{R}$. The term drying front refers to drying due to meltwater drainage instead of refreezing \citep{clark2017analytical}. Rarefaction waves are a weak solution of Equation \eqref{eq:37} if the resultant profile of $\textbf{u}$ is single-valued. This condition is satisfied if $\textbf{u}$ lies on the branch of the integral curve $\textbf{r}_2$ emanating from $\textbf{u}_0$ in the direction of increasing $\lambda_2$ (see Figures \ref{fig:3} and \ref{fig:contact_shock_rarefaction}\textit{a}, blue lines). The analytical solution concerning the self-similar variable $\eta$ for a rarefaction wave on an integral curve can be evaluated from Equation \eqref{eq:ODE-system-solution}, which comes out to be

\begin{align} 
     \mathcal{H}&= \sqrt[n-1]{\frac{\eta}{n (1-\mathcal{C}_\leftstate + \mathcal{H}_\leftstate)^{m-n}}}\quad \text{and} \label{eq:rarefaction-soln-C}\\
     \mathcal{C} &= \mathcal{C}_\leftstate -\mathcal{H}_\leftstate + \sqrt[n-1]{\frac{\eta}{n (1-\mathcal{C}_\leftstate + \mathcal{H}_\leftstate)^{m-n}}}. \label{eq:rarefaction-soln-H}
\end{align}
The final solution in this case takes the form
\begin{align}\label{eq:final-sol-rarefaction-only}
    \textbf{u} &= \begin{cases}\textbf{u}_l, \quad & \eta < \lambda_2(\textbf{u}_l) \\  
 \begin{bmatrix}\mathcal{C}_\leftstate -\mathcal{H}_\leftstate + \sqrt[n-1]{\frac{\eta}{n (1-\mathcal{C}_\leftstate + \mathcal{H}_\leftstate)^{m-n}}}\\ \sqrt[n-1]{\frac{\eta}{n (1-\mathcal{C}_\leftstate + \mathcal{H}_\leftstate)^{m-n}}} \end{bmatrix}, \quad & \lambda_2(\textbf{u}_l) <\eta<\lambda_2(\textbf{u}_r)
    \\\textbf{u}_r, \quad & \eta>\lambda_2(\textbf{u}_r)\end{cases}.
\end{align}
where the speed of the second characteristic $\lambda_2(\cdot)$ is evaluated from Equation \eqref{eq:Evals-region2}.
An example of this case is when meltwater flux instantly drops on a glacier, leading to a smoothening drainage front. Figures \ref{fig:contact_shock_rarefaction}\textit{a}-\textit{c} (blue line) show a moving rarefaction developed inside a 70\% porous firn due to an instantaneous reduction in meltwater flux, captured by lower (40\%) and higher (55\%) liquid water content layers on top and bottom respectively. These numbers translate to left (top) $\textbf{u}_l=[0.7, 0.4]^T$ and right (bottom) states being $\textbf{u}_r=[0.85,0.55]^T$. The resulting evolution of volume fractions is shown in Figures \ref{fig:contact_shock_rarefaction}\textit{g}-\textit{i}. The leading edge of the rarefaction front moves faster than the trailing edge connected by a gradual, linear smoothening. In this case, it is a linear profile (straight line) as the power law exponents in Equation \eqref{eq:final-sol-rarefaction-only} are $m=3$ and $n=2$. Drying/rarefaction front has been studied by \cite{clark2017analytical} and was observed in models studying the Dye-2 site in Greenland on 12 August 2016 after the meltwater flux ceased \citep{samimi2020meltwater,vandecrux2020firn,colliander2022ice}.

\subparagraph{(ii.) Shock wave (Case III):} If the right state $\textbf{u}$ lies on the opposite branch of the integral curve (shown by red line in Figure \ref{fig:contact_shock_rarefaction}\textit{a} for example), a continuous reaction front would result in unphysical solutions as the characteristics will cross each other. Therefore, in this case $\textbf{u}$ is connected to left state $\textbf{u}_0$ by a discontinuous reaction front that propagates with a velocity, $\Lambda_{\mathscr{S}}(\textbf{u}_0,\textbf{u})$, which can be calculated from R-H jump condition \eqref{eq:RH-condition} using initial conditions ($\textbf{u}_l,\textbf{u}_r$). Such fronts are referred to as wetting/shock fronts, denoted by the symbol $\mathscr{S}$. The set of permissible right states $\textbf{u}$ that can be connected to the left state $\textbf{u}_0$ by shocks lie on the segment of the Hugoniot-locus that satisfies the entropy condition. In the three-phase region for the system of equations considered, the Hugoniot-locus is the same as the integral curve, which is found to be the fast path $\textbf{r}_2$ \eqref{eq:fast-path} from the Hugoniot jump condition \eqref{eq:RH-condition}, since the flux of enthalpy and composition are the same. The dimensionless velocity of the shock from Equation \eqref{eq:RH-condition} is then 

\begin{align}\label{eq:RH-condition-shock-case}
\Lambda_{\mathscr{S}}(\textbf{u}_+,\textbf{u}_-) = \frac{\d \zeta}{\d \tau} = \frac{f(\textbf{u}_+)-f(\textbf{u}_-)}{\mathcal{H}_+-\mathcal{H}_{-}} = \frac{f(\textbf{u}_+)-f(\textbf{u}_-)}{\mathcal{C}_+-\mathcal{C}_{-}},
\end{align}

where the symbol $-$ is the left (top) state and $+$ is the right (bottom) state for this particular configuration where $f(\textbf{u}_l)>f(\textbf{u}_r)$. The final solution thus takes the form 

\begin{align}\label{eq:final-sol-shock-only}
    \textbf{u} &= \begin{cases}\textbf{u}_l, \quad & \zeta /  \tau < \Lambda_{\mathscr{S}}(\textbf{u}_r,\textbf{u}_l)
    \\ \textbf{u}_r, \quad & \zeta /  \tau>\Lambda_{\mathscr{S}}(\textbf{u}_r,\textbf{u}_l)  \end{cases}.
\end{align}
An example of this case is when meltwater flux instantly increases on a temperate glacier, it leads to a sharp wetting front propagating in the direction of gravity. Figures \ref{fig:contact_shock_rarefaction}\textit{j}-\textit{k} show a moving shock front developed inside a 70\% porous firn due to an instantaneous increase in meltwater flux, captured by higher (40\%) and lower (20\%) liquid water content $\phi_w$ in top and bottom layers respectively. These numbers translate to left (top) state being $\textbf{u}_l=[0.7, 0.4]^T$ and right (bottom) state being $\textbf{u}_r=[0.5,0.2]^T$ sketched analytically in Figures \ref{fig:contact_shock_rarefaction}\textit{a}-\textit{c}. The wetting front has been discussed \citep[e.g.]{colbeck1972theory,humphrey2012thermal,meyer2017continuum,clark2017analytical} and observed in models \citep[e.g.]{vandecrux2020firn, samimi2021time, colliander2022ice} and field observations including at the Dye-2 site in Greenland on 9 August 2016 when meltwater percolates in temperate firn \citep{heilig2018seasonal,samimi2020meltwater}.

\paragraph{{{(c.) Two fronts with an intermediate state:}}}
The solution profile contains a single reaction front if $\textbf{u}_\leftstate$ and $\textbf{u}_\rightstate$ share the same integral curve or Hugoniot-locus. In all other cases in the three-phase region without complete saturation, a different state than left or right state forms which is referred to as an intermediate state, $\textbf{u}_i$, in the hodograph plane. At this intermediate state, the solution switches from the first characteristic field $(\lambda_1,\textbf{r}_1)$ to the second $(\lambda_2,\textbf{r}_2)$. In other words, at the intermediate state $\textbf{u}_i$, the solution changes from the stationary front along constant flux lines (slow path) to the advancing reaction front along the path of constant porosity contours (fast path) that also mimic the liquidus ($\mathcal{H}=\mathcal{C}$).
The two possible intermediate states are given by the intersections of the integral curves emanating from $\textbf{u}_\leftstate$ and $\textbf{u}_\rightstate$ (see Figure \ref{fig:two-intermediate-states}\textit{a} for example). Only one intersection yields a physically realistic single-value solution. The correct intersection is selected by requiring that the propagation speed increases monotonically from $\textbf{u}_\leftstate$ and $\textbf{u}_\rightstate$. A single-valued solution is ensured \textit{if and only if} $\textbf{u}_\leftstate$ and $\textbf{u}_i$ lie on the slow path first and then $\textbf{u}_i$ is connected to $\textbf{u}_\rightstate$ along the fast path. The reactive melt transport system considered here only allows two solutions for this case:
\begin{align}
\textbf{u}_\leftstate\xrightarrow{\mathscr{C}_1}\textbf{u}_i \xrightarrow{\mathscr{R}_2}\textbf{u}_\rightstate \quad \textrm{and} \quad \textbf{u}_\leftstate\xrightarrow{\mathscr{C}_1}\textbf{u}_i \xrightarrow{\mathscr{S}_2}\textbf{u}_\rightstate,
\end{align}
because the first characteristic is linearly degenerate and the reaction front along the slow path is always a contact discontinuity $\mathscr{C}_1$. The reactive melt transport across an initial discontinuity is characterized by the formation of a reacted zone corresponding to $\textbf{u}_i$ that is bounded between a stationary front $\mathscr{C}_1$, and an advancing front that is either a rarefaction wave $\mathscr{R}_2$ or a shock wave $\mathscr{S}_2$. Below we will discuss these two cases.

\begin{figure}
    \centering
    \includegraphics[width=1.0\linewidth]{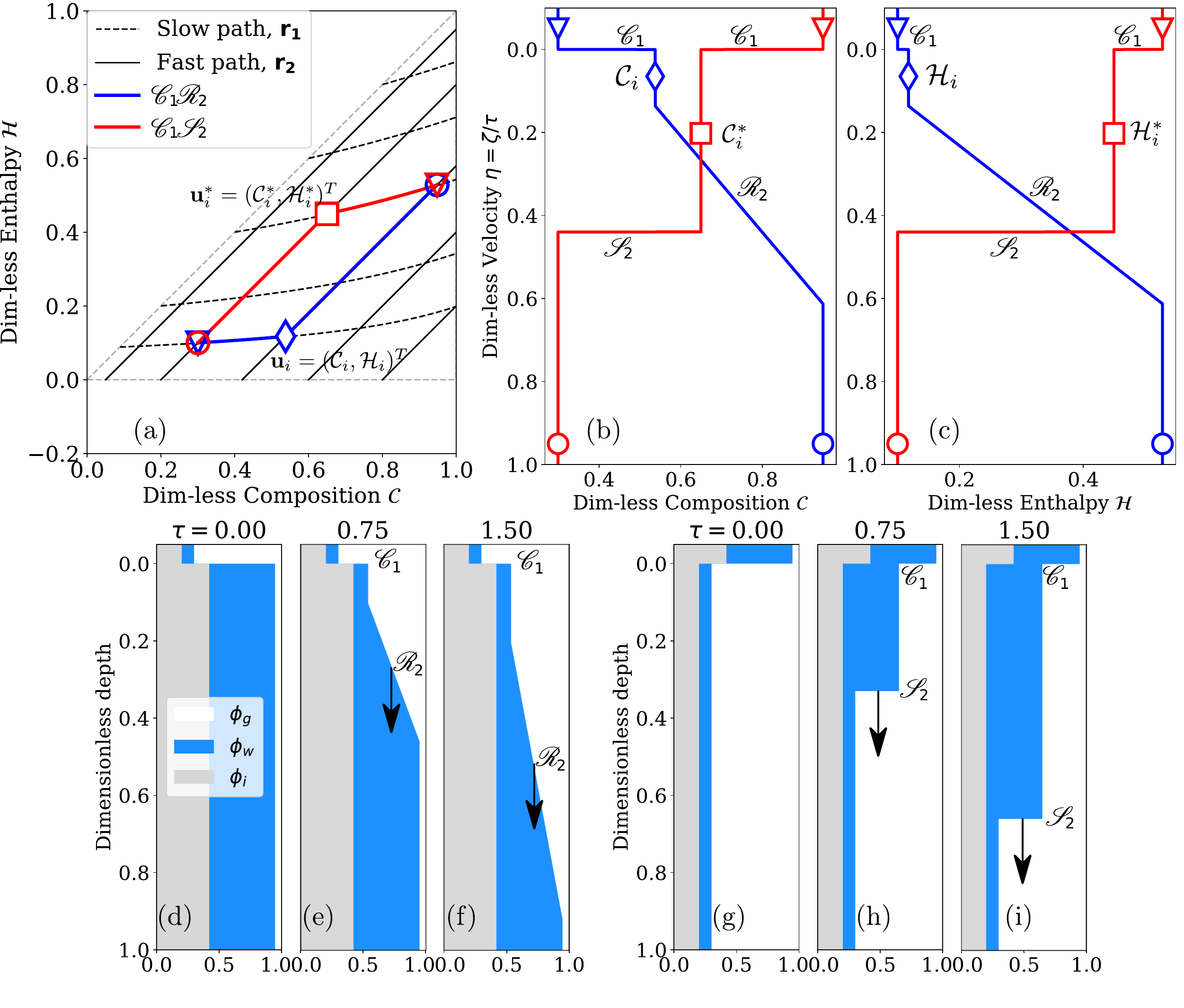}
    \caption{Formation of an intermediate state $\textbf{u}_i$ or $\textbf{u}^*_i$ for Case IV - $\mathscr{C}_1\mathscr{R}_2$ or Case V - $\mathscr{C}_1\mathscr{S}_2$ respectively. An asterisk is used to differentiate the two intermediate states corresponding to the two cases. (a) Construction of solution in hodograph plane and their corresponding self-similar analytical solutions for (b) dimensionless composition and (c) dimensionless enthalpy with dimensionless velocity $\eta$. The evolution of the volume fractions of the three phases in the system at dimensionless times $\tau=0,0.5,1.0$ for the two configurations: (d-f) Case IV - $\mathscr{C}_1\mathscr{R}_2$ and (g-i) Case V - $\mathscr{C}_1\mathscr{S}_2$.}
    \label{fig:two-intermediate-states}
\end{figure}

\subparagraph{(i.) 1-Contact discontinuity and 2-Rarefaction (Case IV):} In this case $f(\textbf{u}_l)<f(\textbf{u}_r)$ and the resulting first characteristic wave is a contact discontinuity, $\mathscr{C}_1$, which satisfies Equation \eqref{eq:slow-path} for left state $\textbf{u}_\leftstate$ and the intermediate state $\textbf{u}_i$. The second characteristic wave is a rarefaction which is governed by Equations \eqref{eq:fast-path}, \eqref{eq:rarefaction-soln-C} and \eqref{eq:rarefaction-soln-H} for intermediate state $\textbf{u}_i$ and right state $\textbf{u}_r$. Combining all these equations results in a nonlinear algebraic equation to evaluate $\mathcal{H}_i$ that is given by

\begin{align} \label{eq:52}
    {1 +\mathcal{H}_\rightstate- \mathcal{C}_\rightstate-\mathcal{H}_i^{-\frac{1}{n-1}} {\mathcal{H}_\leftstate}^{\frac{n}{n-1}} { \left(\frac{1-\mathcal{C}_\leftstate + \mathcal{H}_\leftstate}{1-\mathcal{C}_\rightstate + \mathcal{H}_\rightstate}\right)^{\frac{m-n}{n-1}}}+ \mathcal{H}_i}&=\left({1+ \mathcal{H}_\leftstate-\mathcal{C}_\leftstate}\right) \left(\frac{\mathcal{H}_i}{\mathcal{H}_\leftstate}\right)^{-\frac{n}{m-n}}
\end{align}
which can be re-written in terms of porosities, $\varphi$, as

\begin{align}\label{eq:1.169new}
    {\varphi_\rightstate-\mathcal{H}_i^{-\frac{1}{n-1}} {\mathcal{H}_\leftstate}^{\frac{n}{n-1}} { \left(\frac{\varphi_\leftstate}{\varphi_\rightstate}\right)^{\frac{m-n}{n-1}}}+ \mathcal{H}_i}&={\varphi_\leftstate} \left(\frac{\mathcal{H}_i}{\mathcal{H}_\leftstate}\right)^{-\frac{n}{m-n}}.
\end{align}
Next, the composition at intermediate state, $\mathcal{C}_i$, can be computed from Equations \eqref{eq:fast-path} and \eqref{eq:Evals-region2} which corresponds to the fast path and the speed of the second characteristic, $\lambda_2$, respectively. The final solution in this case takes the form

\begin{align}\label{eq:final-sol-contact-rarefaction}
    \textbf{u} &= \begin{cases}\textbf{u}_l, \quad &  \zeta < 0 \\  \textbf{u}_i, \quad &  0< \zeta / \tau < \lambda_2(\textbf{u}_i) \\  
 \begin{bmatrix}\mathcal{C}_i -\mathcal{H}_i + \sqrt[n-1]{\frac{\eta}{n (1-\mathcal{C}_i + \mathcal{H}_i)^{m-n}}}\\ \sqrt[n-1]{\frac{\eta}{n (1-\mathcal{C}_i + \mathcal{H}_i)^{m-n}}} \end{bmatrix}, \quad & \lambda_2(\textbf{u}_i) <\eta<\lambda_2(\textbf{u}_r)
    \\\textbf{u}_r, \quad & \eta>\lambda_2(\textbf{u}_r)\end{cases}.
\end{align}

{An example of this case is a sudden drop in meltwater flux inside temperate firn where porosity also reduces with depth (Figure \ref{fig:two-intermediate-states}\textit{d}). Physically, the first contact discontinuity represents the constant flux of water entering the bottom, low porosity layer (Figures \ref{fig:two-intermediate-states}\textit{e},\textit{f}). The second rarefaction shows the drainage of the wetter firn due to gravity. Figure~\ref{fig:two-intermediate-states}\emph{a} shows the construction of the solution $\mathscr{C}_1 \mathscr{R}_2$ in the blue line for the left (top) state which is more porous $\varphi_l=80\%$, and has less water content (LWC) $\phi_{w,l}=0.1$. The right (bottom) state is less porous $\varphi_r=58.0\%$ but has more liquid water content $\phi_{w,r}=0.528$. These values correspond to left and right states being $\textbf{u}_l=(0.3,0.1)^T$ and $\textbf{u}_r=(0.948,0.528)^T$ respectively shown in Figures \ref{fig:two-intermediate-states}\textit{a}-\textit{c}. The intermediate state comes out to be $\textbf{u}_i=(0.538,0.117)^T$ which corresponds to LWC $\phi_{w,i}=11.7\%$ and the same porosity as the right state, i.e., $\varphi_i=58.0\%$. Figures~\ref{fig:two-intermediate-states}\emph{b} and \ref{fig:two-intermediate-states}\emph{c} show the corresponding self-similar analytical solutions for composition and enthalpy respectively for this case with blue lines which only depend on the dimensionless velocity. The rarefaction moves down with a characteristic velocity that can be computed analytically. Figures~\ref{fig:two-intermediate-states}\emph{d}-\emph{f} shows the resulting evolution of volume fraction of each phase at different times showing self-similar expansion of the rarefaction wave.

\subparagraph{(ii.) 1-Contact discontinuity and 2-Shock (Case V):}
This case is similar to Case IV but has $f(\textbf{u}_l)>f(\textbf{u}_r)$. As a result, the first characteristic wave is a contact discontinuity, $\mathscr{C}_1$, that satisfies Equation \eqref{eq:slow-path} for the left state $\textbf{u}_\leftstate$ and intermediate state $\textbf{u}_i$. Since the second characteristic lies on the Hugoniot locus, the result is a shockwave, $\mathscr{S}_2$, which satisfies the Hugoniot-jump condition \eqref{eq:RH-condition} for the intermediate state and right state. Combining Equations \eqref{eq:slow-path} and \eqref{eq:fast-path} gives a simple relation for dimensionless enthalpy at intermediate state, $\mathcal{H}_i$, to be 

\begin{align}
      \mathcal{H}_i&=\mathcal{H}_\leftstate {\left(\frac{\varphi_\leftstate}{\varphi_\rightstate}\right)}^{\frac{m-n}{n}} = \mathcal{H}_\leftstate {\left(\frac{1+ \mathcal{H}_\leftstate-\mathcal{C}_\leftstate}{1+ \mathcal{H}_\rightstate-\mathcal{C}_\rightstate}\right)}^{\frac{m-n}{n}}.\label{eq:intermediate-Hi-temperate-C1R2}
\end{align}

Then Equation \eqref{eq:fast-path} for the intermediate state, $\textbf{u}_i$, and right state, $\textbf{u}_\rightstate$, provides the value of dimensionless composition at intermediate state, $\mathcal{C}_i$. The final solution in this case takes the form

\begin{align}\label{eq:final-sol-contact-shock}
    \textbf{u} &= \begin{cases}\textbf{u}_l, \quad &  \zeta < 0 \\  \textbf{u}_i, \quad &  0 < \zeta / \tau < \Lambda_\mathscr{S}(\textbf{u}_i,\textbf{u}_r) \\  \textbf{u}_r, \quad & \zeta / \tau>\Lambda_\mathscr{S}(\textbf{u}_i,\textbf{u}_r)\end{cases}.
\end{align}

{An example of this case is a sudden rise in meltwater flux inside temperate firn where porosity also reduces with depth (Figure~\ref{fig:two-intermediate-states}\textit{g}). Physically, the first contact discontinuity represents the increased, constant flux of water entering the bottom, low porosity layer. The second shock shows the wetting front advancing the water content to dryer firn because of gravity (Figures~\ref{fig:two-intermediate-states}\textit{h},\textit{i}). Figure~\ref{fig:two-intermediate-states}\emph{a} shows the construction of the solution $\mathscr{C}_1 \mathscr{S}_2$ with red line for the left state which is less porous, i.e., $\varphi_l=58.0\%$, but has more water content ($\phi_{w,l}=52.8\%$), and the right state is more porous and less wet corresponding to $\varphi_r = 80\%$ and LWC $\phi_{w,r}=10\%$. These values correspond to $\textbf{u}_l=(0.948,0.528)^T$ and $\textbf{u}_r=(0.3,0.1)^T$ in Figures \ref{fig:two-intermediate-states}\textit{a}-\textit{c}. The intermediate state comes out to be $\textbf{u}^*_i=(0.648,0.448)^T$ which corresponds to the wetter intermediate region behind the wetting front with LWC $\phi_{w,i}=44.8\%$ but the same porosity as the right state, i.e., $\varphi_i=80\%$ (see Figures~\ref{fig:two-intermediate-states}\emph{g}-\emph{i}).} Figures~\ref{fig:two-intermediate-states}\emph{b} and \ref{fig:two-intermediate-states}\emph{c} show the corresponding self-similar analytical solutions for composition and enthalpy respectively for this case with red lines which only depend on the dimensionless velocity.

\paragraph{{{(d.) Two fronts and a jump with two intermediate states (formation of a saturated region, Case VI):}}} The initial conditions of this case are similar to Case V ($\mathscr{C}_1\mathscr{S}_2$). However, in this case, the slow path emanating from the left state does not intersect with the fast path from the right state in the three-phase region where $\mathcal{C}<1$. In other words, the bottom layer is unable to accommodate the flux of meltwater from the top layer, as discussed for soils in \cite{shadab2022analysis}. If the intermediate state leads to complete saturation, i.e., $\mathcal{C}_i=1$, then the proposed hyperbolic PDE solution framework breaks down. This happens because inside the fully-saturated region, the dynamics of the water phase change from gravity-driven to pressure-driven as the governing model changes from hyperbolic (local) to elliptic (global) partial differential equation (see \cite{shadab2022analysis} for a detailed analysis). In this case, the dynamics becomes complicated as the complete solution cannot be directly interpreted from the hodograph plane because the flux in the saturated region may not simply be the hydraulic conductivity \eqref{eq:dimless-flux-CH-vector} anymore.

Nevertheless, we can still construct a full analytical solution to this problem using the extended kinematic wave approximation proposed by \cite{shadab2022analysis}. In this case, the solution consists of three waves including a backfilling shock moving upwards, denoted by symbol $\mathscr{S}_1^*$, a stationary jump at the initial location of the jump, denoted by $\mathscr{J}_2$, and a downward moving shock (wetting front), $\mathscr{S}_3$, into the less porous and temperate layer. Note that the jump $\mathscr{J}_2$ lies in the saturated region and therefore does not represent any hyperbolic wave. Therefore, it is represented by a broken arrow in the full solution given by

\begin{align*}
\textbf{u}_\leftstate\xrightarrow{\mathscr{S}^*_1}\textbf{u}_{i_1}\xdashrightarrow{\mathscr{J}_2}\textbf{u}_{i_2} \xrightarrow{\mathscr{S}_3}\textbf{u}_\rightstate.
\end{align*}

The solution is explained in detail as follows. First, the backfilling shock on the fast path (constant porosity line) connects to the first intermediate state, $i_1$, which lies on the line $\mathcal{C}=1$. Therefore, the first intermediate state variables are

\begin{align}\label{eq:i1_variables}
    \mathcal{C}_{i_1} = 1 \quad \text{and} \quad \mathcal{H}_{i_1}=1-\mathcal{C}_\leftstate+\mathcal{H}_\leftstate.
\end{align}

This state is observed right next to the left state and can be considered as the rising perched water table in the region $\zeta < 0 $. The speed of this backfilling, upper shock in dimensionless form is again given by Rankine-Hugoniot jump condition \citep{leveque1992numerical} as

\begin{align}\label{eqn:Su}
    \Lambda_{\mathscr{S}^*_1} = \frac{\d \zeta_U}{\d \tau} = \frac{f(\textbf{u}_\leftstate)-q_s(\tau)}{\mathcal{H}_\leftstate-\mathcal{H}_{i_1}}= \frac{f(\textbf{u}_\leftstate)-q_s(\tau)}{\mathcal{C}_\leftstate-\mathcal{C}_{i_1}} < 0,
\end{align}    
where $\zeta_U$ is the location of the upper shock. The shock moves upwards due to choking as the numerator is positive since flux $f(\textbf{u}_\leftstate)$ is more than the time-dependent dimensionless flux in the saturated region, $q_s(\tau)$, which is also scaled by $K_h$. Note that the flux in the saturated region $q_s(\tau)$ is not the saturated hydraulic conductivity but instead, it is governed by the dynamics of the saturated region.

Next, the first intermediate state $i_1$ is connected to the second intermediate state $i_2$ through a stationary jump $\mathscr{J}_2$ at the location of initial jump at $\zeta = 0$. Both intermediate states lie in the fully-saturated region and therefore the jump $\mathscr{J}_2$ does not represent a hyperbolic wave. The flux inside the isothermal saturated region $q_s(\tau)$ is found to be uniform in this case \citep{shadab2022analysis}. Therefore, the flux between the two intermediate states is also same, equal to $q_s(\tau)$. The state variables $\mathcal{C}$ and $\mathcal{H}$ for the second intermediate state, $i_2$, are provided by the right state. The second intermediate state also lies at $\mathcal{C}=1$ on the fast path \eqref{eq:fast-path} (Hugoniot locus) emanating from right state, i.e., $\mathcal{C}_r-\mathcal{H}_r=\mathcal{C}_{i_2}-\mathcal{H}_{i_2}$. Therefore, the second intermediate state variables are simply,

\begin{align}\label{eq:i2_variables}
    \mathcal{C}_{i_2} = 1 \quad \text{and} \quad \mathcal{H}_{i_2} = 1 - \mathcal{C}_\rightstate + \mathcal{H}_\rightstate . 
\end{align}

Similarly, the velocity of the downward-moving lower shock (wetting front) is

\begin{align}\label{eqn:Sl}
    \Lambda_{\mathscr{S}_3} = \frac{\d \zeta_L}{\d \tau} = \frac{q_s(\tau)-{f{(\textbf{u}_\rightstate})}}{\mathcal{C}_{i_2}-\mathcal{C}_\rightstate} = \frac{q_s(\tau)-f{(\textbf{u}_\rightstate})}{1-\mathcal{C}_\rightstate} > 0 
\end{align}
where $\zeta_L$ is the dimensionless location of the lower shock. Similar to a two-layer soil \cite{shadab2022analysis}, the dimensionless flux in the saturated region is a depth-based harmonic mean of the dimensionless saturated hydraulic conductivities at the two intermediate states given by

\begin{align}\label{eq:harmonic-two-layer-final}
   q_s(\tau) =\frac{\zeta_U(\tau) - \zeta_L(\tau)}{\frac{\zeta_U(\tau)}{K_{i_1}}-\frac{\zeta_L(\tau)}{K_{i_2}}},
\end{align}
where $K_{i_1}$ and $K_{i_2}$ are the dimensionless saturated hydraulic conductivities at the first and second intermediate states given by
\begin{align}\label{eq:sat-hyd-cond-sat-regions}
    K_{i_1} = \varphi_{i_1}^m = (1-\mathcal{C}_\leftstate+\mathcal{H}_\leftstate)^m \quad \text{and} \quad K_{i_2} = \varphi_{i_2}^m = (1-\mathcal{C}_\rightstate+\mathcal{H}_\rightstate)^m
\end{align}
using Equations \eqref{eq:i1_variables} and \eqref{eq:i2_variables}. Note that the porosities at the first and second intermediate states are the porosities of the left and right state respectively. Solving the system of coupled ordinary differential equations \eqref{eqn:Sl} \& \eqref{eqn:Su} along with the definition of flux in the saturated region \eqref{eq:harmonic-two-layer-final} gives the location of the shocks and the flux in the saturated region. Similar to the case of two-layered soils in \cite{shadab2022analysis}, to find the analytic value of $q_s(\tau)$, the ratio of shock speeds can be considered a constant as an ansatz given by
\begin{align}\label{eq:shock-speed-ratio}
    \frac{\Lambda_{{\mathscr{S}^*_1}}}{\Lambda_{{\mathscr{S}_3}}} = \frac{\zeta_U(\tau)}{\zeta_L(\tau)} = \mathfrak{R}\quad\mbox{for} \quad 0 < \tau \leq \tau_p.
\end{align}
where $\mathfrak{R}$ is the constant ratio of shock speeds which is negative and $\tau_p$ is the dimensionless time of ponding when the upward moving shock reaches to the surface. In the special case when this jump condition happens to exist at the surface, $\mathfrak{R}=0$, $\tau_p=0$ and $q_s=\Kh_{i_2}$ which is saturated hydraulic conductivity of the second intermediate state. Otherwise, by substituting equations from the shock speed ratio definition \eqref{eq:shock-speed-ratio}, shock speeds \eqref{eqn:Su} and \eqref{eqn:Sl}, and flux in the saturated region \eqref{eq:harmonic-two-layer-final}, $\mathfrak{R}$ comes out to be the solution of a quadratic equation

\begin{align}\label{eq:shock-speed-ratio-final}
    \mathfrak{R}(\textbf{u}_\leftstate,\textbf{u}_\rightstate) &= \frac{-b - \sqrt{b^2 - 4ac}}{2a} \quad \text{where}\\
     a &= \left(\frac{1-\mathcal{C}_\leftstate}{1-\mathcal{C}_\rightstate} \right)\left[ 1 - \left(\frac{1-\mathcal{C}_\rightstate+\mathcal{H}_\rightstate} {1-\mathcal{C}_\leftstate+\mathcal{H}_\leftstate}\right)^m \left(\frac{\mathcal{H}_\rightstate}{1-\mathcal{C}_\rightstate+\mathcal{H}_\rightstate} \right)^n \right], \nonumber \\
    b &=  - \left(\frac{1-\mathcal{C}_\leftstate}{1-\mathcal{C}_\rightstate} \right)\left[ 1 - \left(\frac{\mathcal{H}_\rightstate}{1-\mathcal{C}_\rightstate+\mathcal{H}_\rightstate} \right)^n \right]+ \left(\frac{\mathcal{H}_\leftstate}{1-\mathcal{C}_\leftstate+\mathcal{H}_\leftstate} \right)^n - 1 \quad \text{ and } \nonumber \\
    c &= 1- \left(\frac{1-\mathcal{C}_\leftstate+\mathcal{H}_\leftstate} {1-\mathcal{C}_\rightstate+\mathcal{H}_\rightstate}\right)^m \left(\frac{\mathcal{H}_\leftstate}{1-\mathcal{C}_\leftstate+\mathcal{H}_\leftstate} \right)^n  .\nonumber
\end{align}
Subsequently, substituting Equations \eqref{eq:sat-hyd-cond-sat-regions} and \eqref{eq:shock-speed-ratio} in \eqref{eq:harmonic-two-layer-final} gives the time-invariant dimensionless flux in the saturated region $q_s$ during $0 < \tau \leq \tau_p$ as
 
\begin{align}\label{eq:qs_final}
    q_s = \frac{ \mathfrak{R}-1}{\mathfrak{R}/(1-\mathcal{C}_\leftstate+\mathcal{H}_\leftstate)^m - 1/(1-\mathcal{C}_\rightstate+\mathcal{H}_\rightstate)^m}.
\end{align}
 
 The time and space invariant flux $q_s$ leads to the result using Equations \eqref{eqn:Su} and \eqref{eqn:Sl} that the shock speeds are constant for $0< \tau \leq \tau_p$. Therefore, the shock locations before ponding vary linearly with time. Lastly, the full solution for this case can be summarized as

\begin{align}\label{eq:final-sol-contact-jump-shock-temperate}
    \textbf{u} &= \begin{cases}\textbf{u}_l, \quad &  \zeta < \Lambda_{\mathscr{S}^*_1} \\  \textbf{u}_{i_1}=\begin{bmatrix}1\\1-\mathcal{C}_\leftstate+\mathcal{H}_\leftstate \end{bmatrix}, \quad &  \Lambda_{\mathscr{S}^*_1} < \zeta / \tau < 0\\ 
    \textbf{u}_{i_2}=\begin{bmatrix}1\\1-\mathcal{C}_\rightstate+\mathcal{H}_\rightstate \end{bmatrix}, \quad &  0 < \zeta / \tau < \Lambda_{\mathscr{S}_3} \\ 
    \textbf{u}_r, \quad & \zeta / \tau>\Lambda_{\mathscr{S}_3} \end{cases}.
\end{align}
Here the shock speeds $\Lambda_{\mathscr{S}^*_1}$ and $\Lambda_{\mathscr{S}_3}$ depend on both left and right states due to formation of the saturated region as the flux is governed by an elliptic PDE. {This is the case of a temperate firn with more porous and more wet layer lies on top of less porous and less wet layer (Figure \ref{fig:saturated-region-formation-temperate}\textit{d}). In this case, the lower layer is unable to accommodate the flux of water from the upper layer and therefore leads to a rising perched water table as well as a wetting front (Figures \ref{fig:saturated-region-formation-temperate}\textit{e},\textit{f}). Figure \ref{fig:saturated-region-formation-temperate} shows an example of the same where the top layer has 50\% porosity and 40\% liquid water content ($\phi_{w,l}$) and the bottom layer is 30\% porous and has only 10\% LWC. These values correspond to left and right states being $\textbf{u}_l=(0.9,0.4)^T$ and $\textbf{u}_r=(0.8,0.1)^T$ respectively (Figures \ref{fig:saturated-region-formation-temperate}\textit{a}-\textit{c}).} Consequently, the first and second intermediate states, $\textbf{u}_{i_1}=(1,0.5)^T$ and $\textbf{u}_{i_2}=(1,0.3)^T$ respectively, lie in the expanding saturated region (Figures~\ref{fig:saturated-region-formation-temperate}\emph{d}-\emph{f}). The speeds of both fronts $\mathscr{S}_1^*$ and $\mathscr{S}_3$ are constant before ponding occurs. 

\begin{figure}
    \centering
    \includegraphics[width=\linewidth]{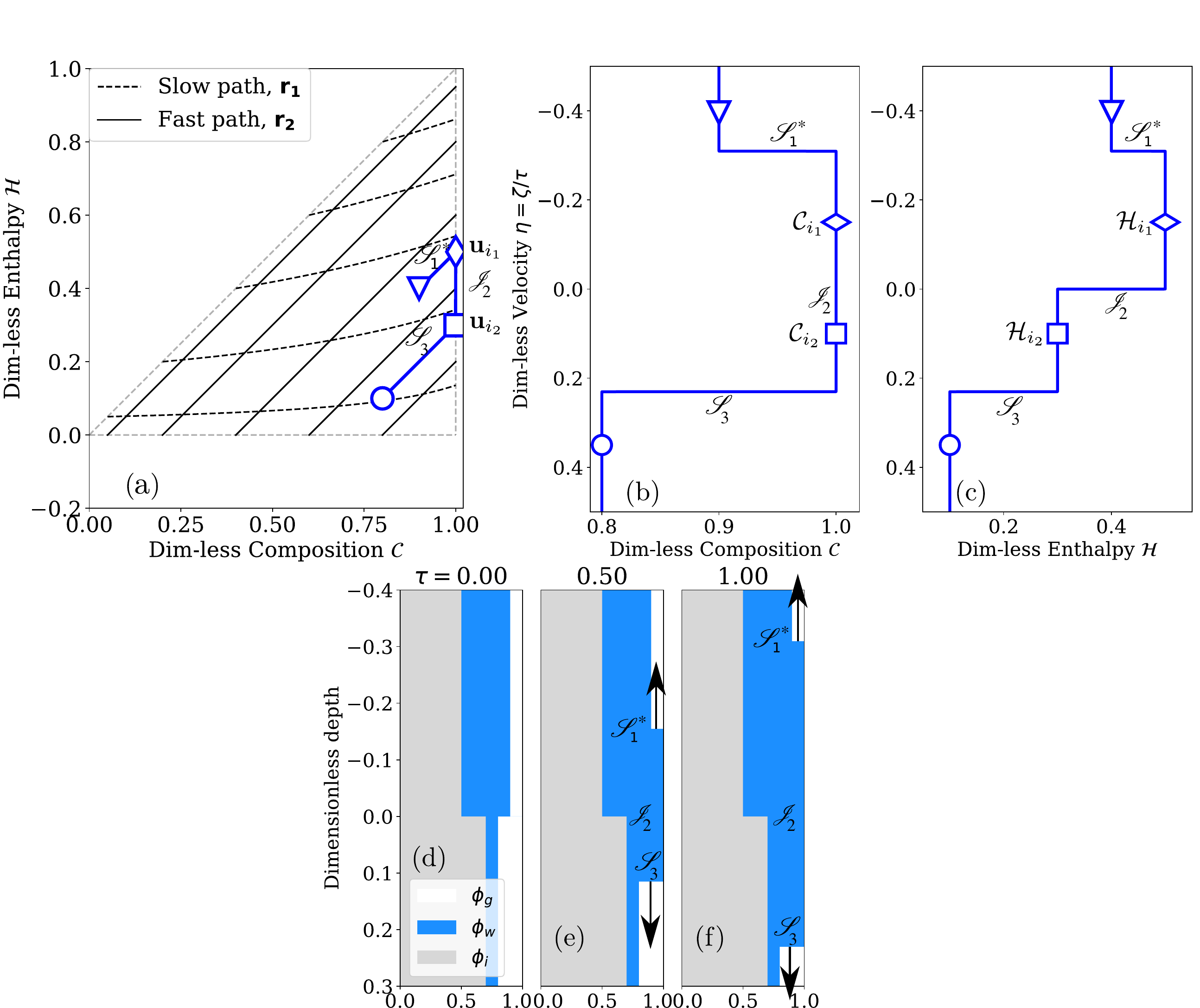}
    \caption{Formation of a fully-saturated region in temperate firn (Case VI): (a) Construction of solution in hodograph plane and their corresponding self-similar analytical solutions for (b) dimensionless composition and (c) dimensionless enthalpy with dimensionless velocity $\eta$. The result shown with dark blue line consists of a backfilling shock, $\mathscr{S}^*_1$, a jump, $\mathscr{J}_2$, and another wetting shock, $\mathscr{S}_3$, along with two intermediate states $\textbf{u}_{i_1}=(\mathcal{C}_{i_1},\mathcal{H}_{i_1})^T$ and $\textbf{u}_{i_2}=(\mathcal{C}_{i_2},\mathcal{H}_{i_2})^T$. The left and right states are $\textbf{u}_l=(0.9,0.4)^T$ and $\textbf{u}_r=(0.8,0.1)^T$ respectively. The first and second intermediate states, $\textbf{u}_{i_1}=(1,0.5)^T$ and $\textbf{u}_{i_2}=(1,0.3)^T$ respectively. The evolution of the volume fractions of the three phases in the resulting system at dimensionless times (d) $\tau=0$, (e) $0.5$, (f) $1.0$.}
    \label{fig:saturated-region-formation-temperate}
\end{figure}

\subsubsection{\textbf{Region 1 only (Ice and gas region), Case VII}}
In region 1 ($\mathcal{H}\leq 0$), since both fluxes are zero the system is not strictly hyperbolic and leads to a single wave. As the characteristic speed, $\lambda_p$, is constant, this wave is linearly degenerate and since $\lambda_1=\lambda_2=0$, the characteristic is stationary. The resulting wave is a stationary contact discontinuity $\mathscr{C}$. There won't be any transport between the two states since the fluxes of both composition and enthalpy are zero on either side. In other words, $\textbf{u}_\leftstate$ will be connected to $\textbf{u}_\rightstate$ by a stationary contact discontinuity $\mathscr{C}$ as

\begin{align*}
\textbf{u}_\leftstate\xrightarrow{\mathscr{C}}\textbf{u}_\rightstate.
\end{align*}
The resulting solution thus takes the form 

\begin{align}\label{eq:contact-disc-only-region1}
    \textbf{u} &= \begin{cases}\textbf{u}_l, \quad & \zeta < 0 \\  \textbf{u}_r, \quad & \zeta>0\end{cases}.
\end{align}

\begin{figure}
    \centering
    \includegraphics[width=\linewidth]{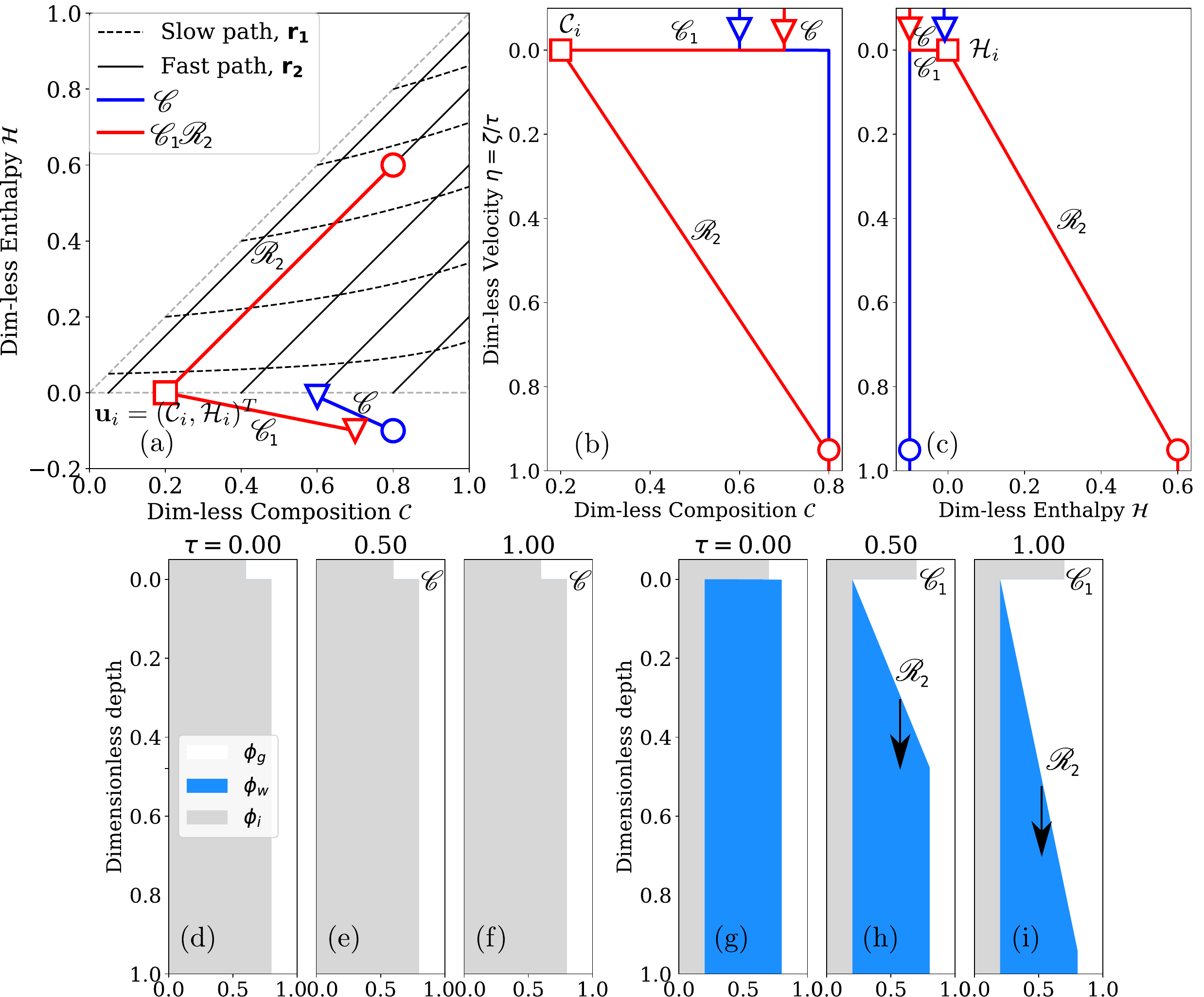}
    \caption{Solutions when the left state lies in Region 1 (ice and gas): either only a contact discontinuity appears (Case VII) or an intermediate state, $\textbf{u}_i$, along with a rarefaction wave $\mathscr{R}_2$ also forms in a $\mathscr{C}_1\mathscr{R}_2$ fashion (Case VIII). (a) Construction of solution in hodograph plane and their corresponding self-similar analytical solutions for (b) dimensionless composition and (c) dimensionless enthalpy with dimensionless velocity $\eta$. Blue and red lines respectively show the solutions when the right states are in Regions 1 and 2 respectively. The evolution of the volume fractions of the three phases in the resulting system at dimensionless times $\tau=0,0.5,1.0$ for the two configurations: (d-f) Case VII - $\mathscr{C}$ and (g-i) Case VIII - $\mathscr{C}_1\mathscr{R}_2$.}
    \label{fig:region-2-related}
\end{figure}

This corresponds to the case when a more porous, cold firn lies on top of a less, cold porous firn (or glacial ice) or vice versa which results in a dry, static system in this model. An example for a temperate, 40\% porous firn on top of a cold ($T=-19.8^\circ$C), 20\% porous firn corresponding to $\textbf{u}_l=(0.6,0.0)^T$ and $\textbf{u}_r=(0.8,-0.1)^T$, as shown in Figures~\ref{fig:region-2-related}\textit{a}-\textit{c} (blue lines) and \ref{fig:region-2-related}\textit{d}-\textit{f}. However, in reality the firn may compact due to overburden \citep{cuffey2010physics} which is not considered in the present model. Note that the temperatures (not shown) in the two layers are the same as the initial condition with sharp transition at $\zeta=0$ due to the absence of heat diffusion.

\subsection{\textbf{Inter-regional cases}}
\subsubsection{\textbf{From Region 1 (ice and gas) to Region 2 (three-phase region), Case VIII}} \label{reg1to2}
  In this case, the left state lies in region 1 corresponding to the cold firn and the right state resides in the three-phase region (region 2). The result is a contact discontinuity $\mathscr{C}_1$ onto the solidus ($\mathcal{H}=0$) where the intermediate state lies (see Figure \ref{fig:region-2-related}\textit{a} for example), i.e.,
  
\begin{align}\label{eq:R1toR2-Hi}
    \mathcal{H}_i = 0.
\end{align}

Moreover, the intermediate state, $\mathcal{C}_i$, lies on the fast path (constant porosity line) in region 2 satisfying Equation \eqref{eq:fast-path} which gives

\begin{align}\label{eq:R1toR2-Ci}
    \mathcal{C}_i &= \mathcal{C}_\rightstate - \mathcal{H}_\rightstate    .
\end{align}

Simultaneously, the intermediate state is connected to the right state on the fast path, resulting in a rarefaction wave $\mathscr{R}_2$. It is important to note that the second wave, $\mathscr{W}_2$, is supposed to be faster than the first and that is why the slow path is avoided in the three-phase region (region 2). The final solution to this case is the same as given in Equation \eqref{eq:final-sol-contact-rarefaction}.

{As an example, the left state corresponds to a 30\% porous cold layer at T$=-22.63^\circ$C lying on top of the layer corresponding to wet, temperate firn similar to liquid storage with 80\% porosity and 60\% LWC ($\phi_{w,r}$), as shown in Figure \ref{fig:region-2-related}\textit{g}. This configuration corresponds to $\textbf{u}_l=(0.7,-0.1)^T$ and $\textbf{u}_r=(0.8,0.6)^T$ with intermediate state $\textbf{u}_i=(0.2,0.0)^T$ (Figure \ref{fig:region-2-related}\textit{a}, red line). As a result, the porosity jump remains stationary but the liquid storage drains downwards due to gravity forming a self-similar rarefaction wave as shown in Figures~\ref{fig:region-2-related}\emph{b}-\emph{c} (red lines) and \ref{fig:region-2-related}\emph{g}-\emph{i}.} In a nutshell, this case describes the evolution of a more saturated firn layer below a previously formed less permeable, cold frozen fringe.


\subsubsection{\textbf{From Region 2 (three-phase region) to Region 1 (ice and gas)}}\label{reg2to1}

This case corresponds to porous and wetter firn at melting temperature with liquid meltwater above the cold firn initially (see Figures~\ref{fig:region-refreezing-front-unsat}-\ref{fig:ice-lens-formation} for example). In this case, the left state corresponding to the top layer lies in the three-phase region (region 2) and the right state corresponding to the bottom layer lies in region 1. Note that the temperature remains subzero only in the cold region (region 1, $\mathcal{H}<0$) which lies only in the right state with a sharp transition from the left or intermediate state as heat diffusion is not considered in this model. There can be four results corresponding to this initial condition: \\

\begin{figure}
    \centering
    \includegraphics[width=\linewidth]{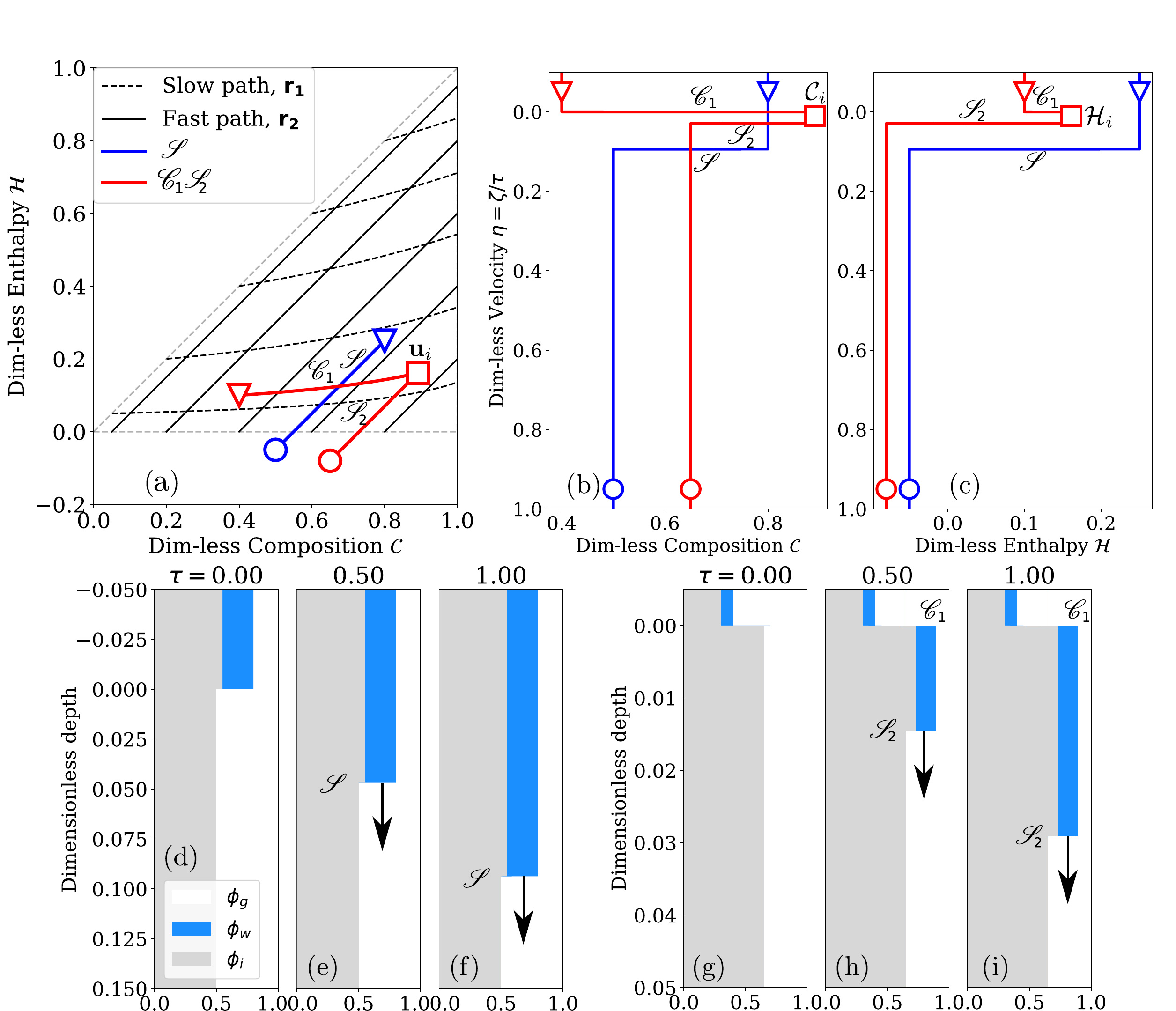}
    \caption{Solutions when the left state lies in region 2 (three-phase) and the right state lies in region 1 without formation of saturated regions: either only a refreezing front $\mathscr{S}$ (Case IX) appears or a contact discontinuity, $\mathscr{C}_1$, an intermediate state, $\textbf{u}_i$, along with a refreezing front $\mathscr{S}_2$ forms in a $\mathscr{C}_1\mathscr{S}_2$ (Case X) fashion. (a) Construction of solution in hodograph plane and their corresponding self-similar analytical solutions for (b) dimensionless composition and (c) dimensionless enthalpy with dimensionless velocity $\eta$. Red and blue lines respectively show the solutions when the right states are in region 2 and region 1 respectively. The evolution of the volume fractions of the three phases in the resulting system at dimensionless times $\tau=0,0.5,1.0$ for the two configurations: (d-f) Case VII - $\mathscr{C}$ and (g-i) Case VIII - $\mathscr{C}_1\mathscr{R}_2$.}
    \label{fig:region-refreezing-front-unsat}
\end{figure}

\subparagraph{(i.) Shock (Case IX):} When the right state in region 1 lies on the fast path in region 2 (three-phase region) extended to region 1 (ice and gas region) referred to as \textit{extended fast path}, it results in only a single moving shock $\mathscr{S}$ (see Figure~\ref{fig:region-refreezing-front-unsat}\textit{a}, blue line). Note that the extended fast paths are not the constant porosity contours in Region 1 (Figures~\ref{fig:combined-variables}\emph{c},\emph{g}). Mathematically, the left and right states are connected by the relation

    \begin{align}\label{eq:R2toR1shock-only}
    \mathcal{C}_\rightstate&=\mathcal{H}_\rightstate+ \mathcal{C}_\leftstate-\mathcal{H}_\leftstate .
\end{align}

The shock speed $\Lambda_{\mathscr{S}}$ can then be evaluated using Rankine-Hugoniot condition \eqref{eq:RH-condition-shock-case}. The final solution of this case is provided in Equation \eqref{eq:final-sol-shock-only}.
This case corresponds to the part where meltwater percolates and precipitates directly into ice, reducing the porosity equal to the left state thereby extending the frozen fringe (Figures~\ref{fig:region-refreezing-front-unsat}\emph{d}-\emph{f}). Physically, the water seeps from the top layer to the bottom layer due to gravity and a part of it precipitates due to heat loss to the surrounding cold firn. Since the flux in the right state is zero, the evaluation depends on the difference of either dimensionless composition or enthalpy. {Figures~\ref{fig:region-refreezing-front-unsat}\emph{a}-\emph{c} (blue line) shows an example of the solution where temperate firn with 45\% porosity ($\varphi_{l}$) and 25\% LWC ($\phi_{w,l}$) initially lies on a cold layer of porosity 50\% at T$=-15.84^\circ$C (Figure~\ref{fig:region-refreezing-front-unsat}\textit{d}). These initial conditions correspond to a left state, $\textbf{u}_l=(0.8,0.25)^T$ and a right state $\textbf{u}_r=(0.5,-0.05)^T$ (Figures~\ref{fig:region-refreezing-front-unsat}\textit{a}-\textit{c}, blue line). A refreezing shock $\mathscr{S}$ results that moves downwards while warming up the surrounding snow by partially refreezing (Figures~\ref{fig:region-refreezing-front-unsat}\emph{d}-\emph{f}). This reduces the porosity behind the wetting front from $50\%$ to $45\%$, equal to the same value as the left state thereby extending the previously formed frozen fringe in the top layer into the bottom layer.}

\subparagraph{(ii.) 1-Contact discontinuity and 2-Shock (Case X):}
This case occurs when the left state cannot be directly connected to the right state along the extended fast path as discussed in Case IX. So in this case, the left state $\textbf{u}_\leftstate$ in region 2 is more porous and connected to an intermediate state, $\textbf{u}_i$, along the slow path where the dimensionless composition of the intermediate state, $\mathcal{C}_i$, is less than unity, to keep the medium unsaturated (see Figure~\ref{fig:region-refreezing-front-unsat}\textit{a}, red line). Similar to the only shock $\mathscr{S}$ case (Case IX), the intermediate state lies on the extended fast path from region 2 to region 1. Therefore the solution is a combination of a stationary contact discontinuity $\mathscr{C}_1$ and a moving shock $\mathscr{S}_2$. The first characteristic wave, contact discontinuity $\mathscr{C}_1$, satisfies Equation \eqref{eq:slow-path} for left state $\textbf{u}_\leftstate$ and intermediate state $\textbf{u}_i$. Since the second characteristic lies on the Hugoniot locus, a shockwave $\mathscr{S}_2$ results, that satisfies the Hugoniot-jump condition \eqref{eq:RH-condition} for the intermediate state $\textbf{u}_i$ and right state $\textbf{u}_\rightstate$. Combining all these equations gives a simple relation for dimensionless enthalpy and composition at the intermediate state as 

\begin{align}
\mathcal{H}_i=\mathcal{H}_\leftstate\left(\frac{1-\mathcal{C}_\leftstate+\mathcal{H}_\leftstate}{1-\mathcal{C}_\rightstate+\mathcal{H}_\rightstate}\right)^{\frac{m-n}{n}} \quad \text{and} \quad \mathcal{C}_i=\mathcal{H}_i+ \mathcal{C}_\rightstate-\mathcal{H}_\rightstate . \label{eq:60new}
\end{align}

The dimensionless velocity of the shock wave $\mathscr{S}_2$ (refreezing front) is

\begin{align}
    \Lambda_{\mathscr{S}_2} &= \frac{f (\textbf{u}_i) - f(\textbf{u}_\rightstate)}{\mathcal{H}_i-\mathcal{H}_\rightstate} = \frac{\mathcal{H}_i^n(1-\mathcal{C}_i+\mathcal{H}_i)^{m-n}}{\mathcal{H}_i-\mathcal{H}_\rightstate} \quad \textrm{(since $f(\textbf{u}_\rightstate)=0$)}. \label{eq:62}
\end{align}

{The full solution is given in Equation \eqref{eq:final-sol-contact-shock} with shock speed \eqref{eq:62}. Since $\mathcal{H}_r<0$, $\mathcal{H}_i>0$ and $f(\textbf{u}_r)=0$, the speed of the refreezing front $\mathscr{S}_2$ is slower than temperate firn case (Case V) due to refreezing. Figures~\ref{fig:region-refreezing-front-unsat}\emph{a}-\emph{c} (red line) shows an example of such a solution for light rainfall on a multilayered firn with coarse to fine transition. The left state corresponds to the wetter, temperate firn on top with a porosity of $70\%$ and a liquid water content $\phi_{w,l}=0.1$. The right state corresponds to a cold and dry firn layer with 35\% porosity and a temperature of T$=-19.49^\circ$C. These values for left and right states correspond to states being $\textbf{u}_l=(0.4,0.1)^T$ and $\textbf{u}_r=(0.65,-0.08)^T$, respectively. Here, a stationary contact discontinuity $\mathscr{C}_1$ stays at the surface leading to a growing intermediate state $\textbf{u}_i=(0.893,0.163)$ formed due to partial refreezing of the rainwater which warms the surrounding firn (Figures~\ref{fig:region-refreezing-front-unsat}\emph{g}-\emph{i}). The intermediate state with $27\%$ porosity and $16.3\%$ LWC expands with time as the refreezing front $\mathscr{S}_2$ infiltrates further into the right state. Note that the porosity in the intermediate state is smaller than the right state which decreases further due to meltwater refreezing leading to the formation of a fresh frozen fringe. This phenomenon has been observed in the model by \cite{meyer2017continuum} who used the dimensional form of Equation \eqref{eq:62} with distinct densities of ice and water phases to explain the field data from \cite{humphrey2012thermal}.}

\subparagraph{(iii.) 1-Backfilling shock, 2-Jump and 3-Shock (formation of a saturated region, Case XI):}
\begin{figure}
    \centering
    \includegraphics[width=\linewidth]{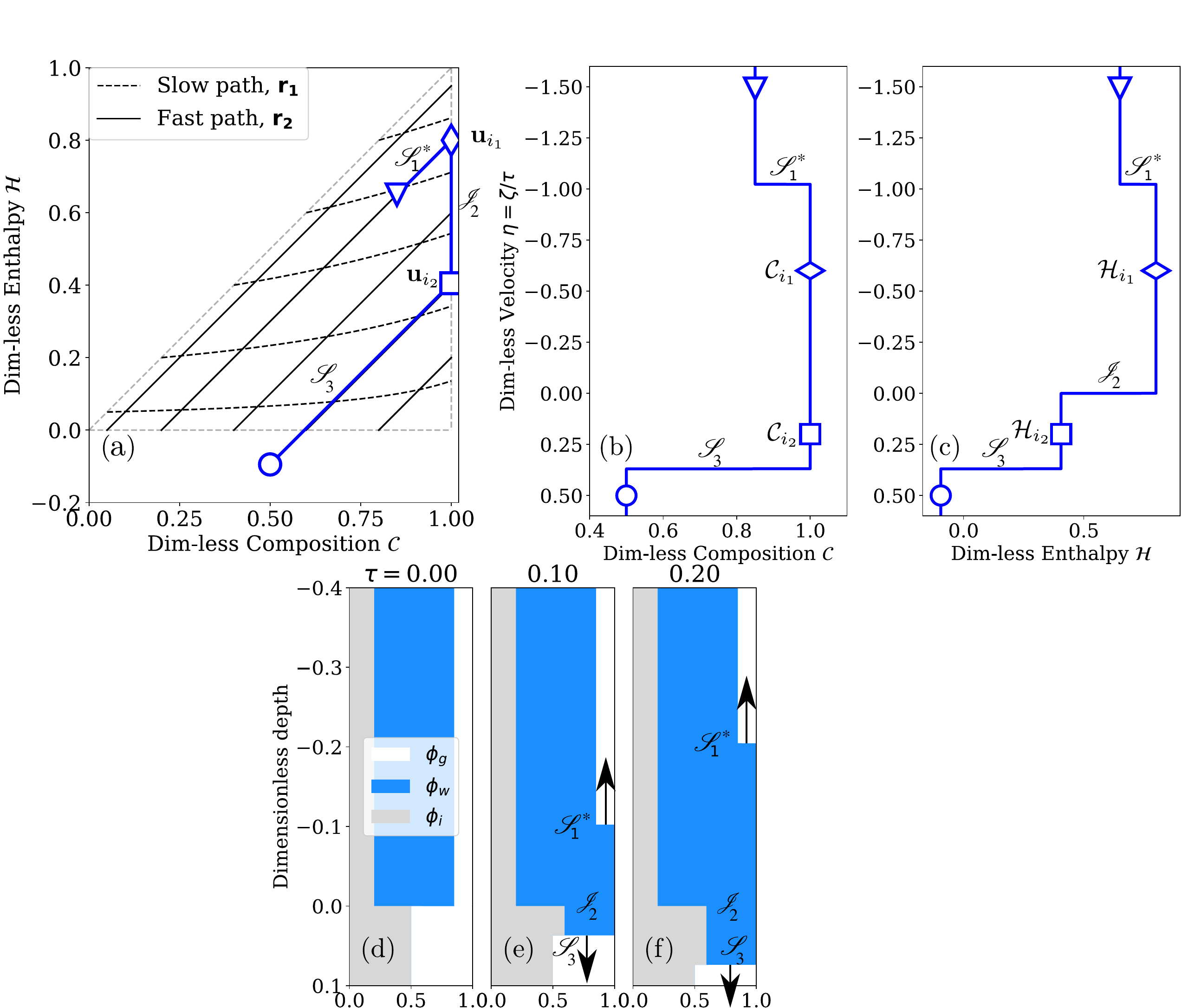}
    \caption{Formation of a fully-saturated region when right state lies in region 1: (a) Construction of solution in hodograph plane and their corresponding self-similar analytical solutions for (b) dimensionless composition and (c) dimensionless enthalpy with dimensionless velocity $\eta$. The result shown with dark blue line consists of a backfilling shock, $\mathscr{S}^*_1$, a jump, $\mathscr{J}_2$, and another refreezing shock, $\mathscr{S}_3$, along with two intermediate states $\textbf{u}_{i_1}=(\mathcal{C}_{i_1},\mathcal{H}_{i_1})^T$ and $\textbf{u}_{i_2}=(\mathcal{C}_{i_2},\mathcal{H}_{i_2})^T$. The evolution of the volume fractions of the three phases in the resulting system at dimensionless times (d) $\tau=0$, (e) $0.1$, (f) $0.2$.}
    \label{fig:saturated-region-formation-cold}
\end{figure}

Here the slow path originating from the left state and the extended fast path emanating from the right state do not intersect where $\mathcal{C}<1$, making it a different case than Case X. If the intermediate state lies on the saturated region, then the hyperbolic nature of the solution breaks down, similar to Case VI for temperate firn (see Table \ref{table:2}). So the state solution \eqref{eq:final-sol-contact-jump-shock-temperate} is the same as provided for the temperate firn (Case VI) consisting of a backfilling shock $\mathscr{S}^*_1$ moving upwards with speed \eqref{eqn:Su}, a stationary jump $\mathscr{J}_2$ at the location of initial jump $\zeta=0$ and a ``refreezing'' front $\mathscr{S}_3$ moving downwards with speed provided in Equation \eqref{eqn:Sl}. The flux in the saturated region $q_s$ is again provided by 
 Equation \eqref{eq:harmonic-two-layer-final}. Note that the flux at the right state $f(\textbf{u}_r)=0$ since it lies in region 1. Invoking the ansatz for a constant shock speed ratio, similar to Case VI, provides an analytic relation for a constant flux in the saturated region, $q_s$, same as Equation \eqref{eq:qs_final} with both shocks moving in opposite directions at constant speeds. However since the flux at the right state is zero, the relation for the ratio of shock speeds, $\mathfrak{R}$, earlier provided by Equation \eqref{eq:shock-speed-ratio-final}, now simplifies to 
\begin{align}\label{eq:shock-speed-ratio-final-refreezing}
    \mathfrak{R}(\textbf{u}_\leftstate,\textbf{u}_\rightstate) &= \frac{-b - \sqrt{b^2 - 4ac}}{2a} \quad \text{where}\\
     a &= \left(\frac{1-\mathcal{C}_\leftstate}{1-\mathcal{C}_\rightstate} \right), \nonumber \\
    b &=  - \left(\frac{1-\mathcal{C}_\leftstate}{1-\mathcal{C}_\rightstate} \right)+ \left(\frac{\mathcal{H}_\leftstate}{1-\mathcal{C}_\leftstate+\mathcal{H}_\leftstate} \right)^n - 1 \quad \text{ and } \nonumber \\
    c &= 1- \left(\frac{1-\mathcal{C}_\leftstate+\mathcal{H}_\leftstate} {1-\mathcal{C}_\rightstate+\mathcal{H}_\rightstate}\right)^m \left(\frac{\mathcal{H}_\leftstate}{1-\mathcal{C}_\leftstate+\mathcal{H}_\leftstate} \right)^n  .\nonumber
\end{align}

Figure~\ref{fig:saturated-region-formation-cold} shows an example of this case where the left state corresponds to wet and temperate layer with 80\% porosity ($\varphi_l$) and 65\% LWC ($\phi_{w,l}$) lying on top of very cold and less porous (T$=-30^\circ$C, $\varphi_r=50\%$) firn. This corresponds to the left and right states being $\textbf{u}_l=(0.85,0.65)^T$ and $\textbf{u}_r=(0.5,-0.095)^T$. The first and second intermediate states are $\textbf{u}_{i_1}=(1,0.8)^T$ and $\textbf{u}_{i_2}=(1,0.405)^T$ respectively. The upper shock, also called rising perched water table, is almost four times faster than the lower, refreezing front (see Figures~\ref{fig:saturated-region-formation-cold}\emph{b}-\emph{f}). Below the initial jump at $\zeta=0$ the porosity is initially reduced which is further reduced by refreezing, leading to formation of a frozen fringe (Figures~\ref{fig:saturated-region-formation-cold}\emph{e}-\emph{f}). The second intermediate state below jump $\mathscr{J}_2$ with a newly formed frozen fringe is unable to accommodate the whole volumetric flux of water, leading to the formation of a rising perched water table (see Figures~\ref{fig:saturated-region-formation-cold}\emph{e}-\emph{f}). Once the rising perched water table reaches the surface, it will lead to ponding and can eventually form runoff. 

\subparagraph{(iv.) 1-Backfilling shock, 2-Jump, 3-Contact discontinuity (formation of impermeable ice lens, Case XII):} This final case captures the formation of impermeable ice lens through convection of meltwater and associated latent heat. Ice lens forms through Case XI, if the extended fast path ($\mathcal{H}=\mathcal{C}+\mathfrak{C}$) emanating from the right state reach $\mathcal{C}\geq 1$ at the solidus ($\mathcal{H}=0$) or the porosity of the second intermediate state reaches 0. This case is entirely governed by the right state and ice lens will form if and only if the right state satisfies 
\begin{equation}\label{eq:ice-lens-condition}
    1-\mathcal{C}_r+\mathcal{H}_r\leq 0.
\end{equation} 
The exact location of the right state does not matter if it lies in the region of ice lens formation (grey region in Figure~\ref{fig:ice-lens-formation}\emph{a}). In other words, an impermeable ice lens via heat convection will form if and only if the cold content of the firn exceeds the latent enthalpy of incoming meltwater. The mathematical condition \eqref{eq:ice-lens-condition} in dimensional form has been given in \cite{humphrey2021physical} for distinct densities of ice and water. Furthermore, this condition has been qualitatively discussed in \cite{colbeck1976analysis,clark2017analytical}. In this limit, the solution for Case XI breaks down due to the formation of an impermeable ice lens. {Figure~\ref{fig:ice-lens-formation} shows the region of ice lens formation which is perpetuated by very low firn porosity or temperatures in the right state. The final solution for this case is a backfilling shock, $\mathscr{S}^*_1$ to the intermediate saturated region $i_1$. The first saturated region is connected to the second intermediate state $i_2$ through the jump $\mathscr{J}_2$. The second intermediate state $i_2$ is the infinitesimally thin ice lens (see Figures~\ref{fig:ice-lens-formation}\emph{e}-\emph{f}) which blocks the further meltwater. The solution in this case can be written as} 

\begin{align*}
\textbf{u}_\leftstate\xrightarrow{\mathscr{S}^*_1}\textbf{u}_{i_1}\xdashrightarrow{\mathscr{J}_2}\textbf{u}_{i_2} \xrightarrow{\mathscr{C}_3} {\textbf{u}_\rightstate}
\end{align*}
where the state solution is quantified as

\begin{align}\label{eq:ice-lens-formation-solution}
    \textbf{u} &= \begin{cases}\textbf{u}_l, \quad &  \zeta < \Lambda_{\mathscr{S}^*_1} \\  \textbf{u}_{i_1}=\begin{bmatrix}1\\1-\mathcal{C}_\leftstate+\mathcal{H}_\leftstate \end{bmatrix}, \quad &  \Lambda_{\mathscr{S}^*_1} < \zeta / \tau < 0\\ 
    \textbf{u}_{i_2}=\begin{bmatrix}1\\0 \end{bmatrix}, \quad &  0 < \zeta  < \d \zeta \\ 
    \textbf{u}_r, \quad & \zeta>\d \zeta \end{cases},
\end{align}
where $\d \zeta$ is the infinitesimally small thickness of the ice lens.

In this case, only a single shock moves upwards similar to the filling of a bucket. The flux in the saturated region is simply zero from the mass balance at the ice lens, i.e., $q_s=f(\textbf{u}_{i_1})=f(\textbf{u}_{i_2})=0$. The speed of the rising perched water table (first shock $\mathscr{S}^*_1$) is then
\begin{align}\label{eq:ice-lens-shock-speed-backfill}
    \Lambda_{\mathscr{S}^*_1}(\textbf{u}_l) = \frac{\d \zeta_U}{\d \tau} = \frac{f(\textbf{u}_l)}{\mathcal{C}_l-1} = \frac{\mathcal{H}^n(1-\mathcal{C}_l+\mathcal{H}_l)^{(m-n)}}{\mathcal{C}_l-1}
\end{align}
which depends only on the left-state variables.

\begin{figure}
    \centering
    \includegraphics[width=\linewidth]{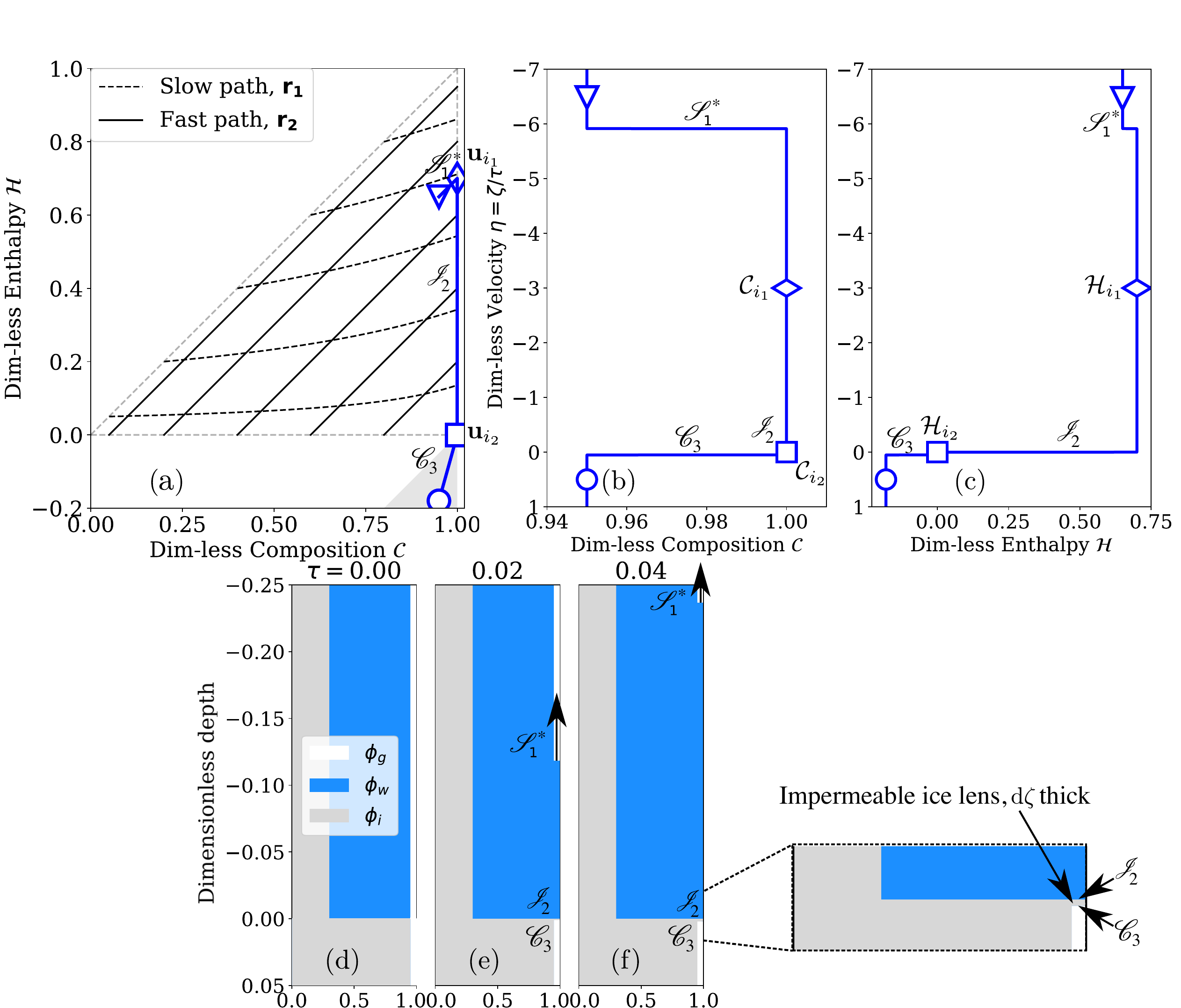}
    \caption{Impermeable ice lens formation: (a) Construction of solution in hodograph plane and their corresponding self-similar analytical solutions for (b) dimensionless composition and (c) dimensionless enthalpy with dimensionless velocity $\eta$. The result shown with dark blue line consists of a backfilling shock, $\mathscr{S}^*_1$, a jump, $\mathscr{J}_2$ and a contact discontinuity, $\mathscr{C}_3$ 
along with two intermediate states $\textbf{u}_{i_1}=(\mathcal{C}_{i_1},\mathcal{H}_{i_1})^T$ and $\textbf{u}_{i_2}=(\mathcal{C}_{i_2},\mathcal{H}_{i_2})^T$.  The second intermediate state $\textbf{u}_{i_2}$ corresponds to the impermeable ice lens. The grey region corresponds to $1-\mathcal{C}+\mathcal{H}\leq 0$ where the right state resides to cause impermeable ice lens formation. The evolution of the volume fractions of the three phases in the resulting system at dimensionless times (d) $\tau=0$, (e) $0.02$, (f) $0.04$.}
    \label{fig:ice-lens-formation}
\end{figure}
It can be deduced that either a cold firn (less $\mathcal{H}$) or a nearly non-porous firn with $\mathcal{C}$ close to unity will induce the formation of an impermeable ice lens via refreezing caused by heat advection. For example, Figure \ref{fig:ice-lens-formation} shows such a solution when wet and temperate layer of 70\% porosity and 65\% LWC on top of a cold layer with 5\% porosity at T$=-30^\circ$C. These conditions correspond to $\textbf{u}_l = (0.95,0.65)^T$ and $\textbf{u}_r = (0.95,-0.180)^T$ (Figures~\ref{fig:ice-lens-formation}\textit{a}-\textit{c}). Here the intermediate states lie at $\textbf{u}_{i_1}=(1,0.7)^T$ and $\textbf{u}_{i_2}=(1,0)^T$ respectively. An impermeable ice lens forms as a result (see Figures \ref{fig:ice-lens-formation}\emph{e}-\emph{f}) that blocks the flow of meltwater downwards. Since there is no meltwater percolating in the lower layer, the backfilling occurs very rapidly.

\begin{sidewaystable}
\caption{Summary of all the analytical solutions presented in this paper along with related works which have either studied or observed the corresponding scenario.}\label{table:2}
\begin{tabular}{|c|cc|p{4.5cm}|p{1.0cm}p{1.0cm}p{1.5cm}| p{5.5cm}| p{3cm}| }
\hline
\multirow{2}{*}{Case} & \multicolumn{2}{c|}{Region} & \multirow{2}{*}{\hspace{1.2cm}State properties} & \multicolumn{3}{c|}{Solution}  & \multirow{2}{*}{Physical relevance} & \multirow{2}{*}{Related work(s)} \\ \cline{2-3} \cline{5-7} 
    & \multicolumn{1}{c|}{$\textbf{u}_l$} & $\textbf{u}_r$ &        & \multicolumn{1}{c|}{Summary} & \multicolumn{1}{p{1.5cm}|}{Equation(s)} & Construction figure & & \\ \hline
I  & \multicolumn{1}{c|}{2}  &  2 &    $f(\textbf{u}_l)=f(\textbf{u}_r)$$\Leftrightarrow$$\textbf{u}_l $, $\textbf{u}_r$ connected by slow path    & \multicolumn{1}{c|}{$\textbf{u}_\leftstate\xrightarrow{\mathscr{C}}\textbf{u}_\rightstate $}  & \multicolumn{1}{p{1.5cm}|}{\ref{eq:slow-path}, \ref{eq:stationary-linear-rxn-front}, \ref{eq:contact-disc-only}} &  \ref{fig:contact_shock_rarefaction} (green line) & Steady meltwater flux inside a temperate firn with a jump in porosity. & $-$\\ \hline
II  & \multicolumn{1}{c|}{2}  &  2 &    $f(\textbf{u}_l)<f(\textbf{u}_r)$; $\textbf{u}_l $, $\textbf{u}_r$ connected by fast path; $\mathcal{C}_l < \mathcal{C}_r $    & \multicolumn{1}{c|}{$\textbf{u}_\leftstate\xrightarrow{\mathscr{R}}\textbf{u}_\rightstate $}  & \multicolumn{1}{p{1.5cm}|}{\ref{eq:fast-path}, \ref{eq:final-sol-rarefaction-only}} &  \ref{fig:contact_shock_rarefaction} (blue line) & A sudden decrease in meltwater flux inside a temperate firn with constant porosity. & \cite{colbeck1976analysis,singh1997kinematic,clark2017analytical} \\ \hline
III  & \multicolumn{1}{c|}{2}  &  2 &    $f(\textbf{u}_l)>f(\textbf{u}_r)$; $\textbf{u}_l $, $\textbf{u}_r$ connected by fast path; $\mathcal{C}_l > \mathcal{C}_r $    & \multicolumn{1}{c|}{$\textbf{u}_\leftstate\xrightarrow{\mathscr{S}}\textbf{u}_\rightstate $}  & \multicolumn{1}{p{1.5cm}|}{\ref{eq:fast-path}, \ref{eq:RH-condition-shock-case}, \ref{eq:final-sol-shock-only}} &  \ref{fig:contact_shock_rarefaction} (red line) & A sudden increase in meltwater flux inside a temperate firn with constant porosity. & \cite{colbeck1971one,colbeck1972theory,singh1997kinematic,samimi2021time} \\ \hline
IV  & \multicolumn{1}{c|}{2}  &  2 &    $f(\textbf{u}_l)<f(\textbf{u}_r)$; $\textbf{u}_l $, $\textbf{u}_r$ cannot be connected by either paths; $\mathcal{C}_i<1$   & \multicolumn{1}{c|}{$\textbf{u}_\leftstate\xrightarrow{\mathscr{C}_1}\textbf{u}_i\xrightarrow{\mathscr{R}_2}\textbf{u}_\rightstate $}  & \multicolumn{1}{p{1.5cm}|}{\ref{eq:52}, \ref{eq:final-sol-contact-rarefaction}} &  \ref{fig:two-intermediate-states} (blue line) &  A sudden decrease in meltwater flux inside a temperate firn with a step change in porosity. & $-$ \\ \hline
V  & \multicolumn{1}{c|}{2}  &  2 &    $f(\textbf{u}_l)>f(\textbf{u}_r)$; $\textbf{u}_l $, $\textbf{u}_r$ cannot be connected by either paths; $\mathcal{C}_i<1$   & \multicolumn{1}{c|}{$\textbf{u}_\leftstate\xrightarrow{\mathscr{C}_1}\textbf{u}_i\xrightarrow{\mathscr{S}_2}\textbf{u}_\rightstate $}  & \multicolumn{1}{p{1.5cm}|}{\ref{eq:intermediate-Hi-temperate-C1R2}, \ref{eq:final-sol-contact-shock}, \ref{eq:RH-condition-shock-case} } &  \ref{fig:two-intermediate-states} (red line) & A sudden increase in meltwater flux inside a temperate firn with a step change in porosity. & $-$ \\ \hline
VI  & \multicolumn{1}{c|}{2}  &  2 &    $f(\textbf{u}_l)>f(\textbf{u}_r)$; $\textbf{u}_l $, $\textbf{u}_r$ cannot be connected by the combination of two paths in Region 2; slow path emanating from $\textbf{u}_l $ does not intersect with fast path emanating from $\textbf{u}_r$ where $\mathcal{C}<1$  & \multicolumn{1}{c|}{$\textbf{u}_\leftstate\xrightarrow{\mathscr{S}^*_1}\textbf{u}_{i_1}\xdashrightarrow{\mathscr{J}_2}\textbf{u}_{i_2} \xrightarrow{\mathscr{S}_3}\textbf{u}_\rightstate$}  & \multicolumn{1}{p{1.5cm}|}{ \ref{eqn:Su}, \ref{eqn:Sl}, \ref{eq:harmonic-two-layer-final}, \ref{eq:sat-hyd-cond-sat-regions}, \ref{eq:shock-speed-ratio-final}, \ref{eq:qs_final}, \ref{eq:final-sol-contact-jump-shock-temperate}} &  \ref{fig:saturated-region-formation-temperate} & A sudden increase in meltwater flux inside a temperate firn with a step decline in porosity with depth leading to formation of a rising perched water table. & \cite{shadab2022analysis} \\ \hline
VII   & \multicolumn{1}{c|}{1}  &  1  &    $f(\textbf{u}_l)=f(\textbf{u}_r)=0$    & \multicolumn{1}{c|}{$\textbf{u}_\leftstate\xrightarrow{\mathscr{C}}\textbf{u}_\rightstate $}  & \multicolumn{1}{p{1.5cm}|}{\ref{eq:contact-disc-only-region1}} &  \ref{fig:region-2-related} (blue line) & A cold firn with a step change in porosity with no meltwater flux. & $-$ \\ \hline
VIII  & \multicolumn{1}{c|}{1}  &  2 &  $-$   & \multicolumn{1}{c|}{$\textbf{u}_\leftstate\xrightarrow{\mathscr{C}_1}\textbf{u}_i\xrightarrow{\mathscr{R}_2}\textbf{u}_\rightstate $}  & \multicolumn{1}{p{1.5cm}|}{ \ref{eq:final-sol-contact-rarefaction}, \ref{eq:R1toR2-Hi}, \ref{eq:R1toR2-Ci}} & \ref{fig:region-2-related} (red line) & A cold firn with no meltwater flux overlying a wet, temperate firn. & $-$\\ \hline
IX  & \multicolumn{1}{c|}{2}  &  1 &    $\textbf{u}_l $, $\textbf{u}_r$ connected by extended fast path & \multicolumn{1}{c|}{$\textbf{u}_\leftstate\xrightarrow{\mathscr{S}}\textbf{u}_\rightstate $}  & \multicolumn{1}{p{1.5cm}|}{\ref{eq:RH-condition-shock-case}, \ref{eq:final-sol-shock-only}, \ref{eq:R2toR1shock-only}} &  \ref{fig:region-refreezing-front-unsat} (blue line) & A sudden increase in meltwater flux into a cold firn with constant porosity leading to the formation of frozen fringe.  & \cite{colbeck1976analysis,clark2017analytical,meyer2017continuum} \\ \hline
X  & \multicolumn{1}{c|}{2}  &  1 &    $\textbf{u}_l $, $\textbf{u}_r$ cannot be connected by extended fast path, $\mathcal{C}_i<1$ & \multicolumn{1}{c|}{$\textbf{u}_\leftstate\xrightarrow{\mathscr{C}_1}\textbf{u}_i\xrightarrow{\mathscr{S}_2}\textbf{u}_\rightstate $}  & \multicolumn{1}{p{1.5cm}|}{\ref{eq:final-sol-contact-shock}, \ref{eq:60new}, \ref{eq:62}} &  \ref{fig:region-refreezing-front-unsat} (red line) & A sudden increase in meltwater flux into a cold firn with a step change porosity leading to the formation of frozen fringe. & \cite{colbeck1976analysis,clark2017analytical,meyer2017continuum} \\ \hline
XI  & \multicolumn{1}{c|}{2}  &  1 &    $\textbf{u}_l $, $\textbf{u}_r$ cannot be connected by extended fast path; slow path from $\textbf{u}_l $ does not intersect with extended fast path from $\textbf{u}_r$ where $\mathcal{C}<1$  & \multicolumn{1}{c|}{$\textbf{u}_\leftstate\xrightarrow{\mathscr{S}^*_1}\textbf{u}_{i_1}\xdashrightarrow{\mathscr{J}_2}\textbf{u}_{i_2} \xrightarrow{\mathscr{S}_3}\textbf{u}_\rightstate$}  & \multicolumn{1}{p{1.5cm}|}{ \ref{eqn:Su}, \ref{eqn:Sl}, \ref{eq:harmonic-two-layer-final}, \ref{eq:sat-hyd-cond-sat-regions}, \ref{eq:shock-speed-ratio}, \ref{eq:shock-speed-ratio-final-refreezing}, \ref{eq:qs_final}, \ref{eq:final-sol-contact-jump-shock-temperate}} &  \ref{fig:saturated-region-formation-cold} & A sudden increase in meltwater flux into a cold firn with a step decrease porosity leading to the formation of frozen fringe and a rising perched water table. & \cite{humphrey2021physical} \\ \hline
XII  & \multicolumn{1}{c|}{2}  &  1 &    $\textbf{u}_r$ lies in impermeable ice lens region satisfying $1-\mathcal{C}_r-\mathcal{H}_r\leq 0 $ & \multicolumn{1}{c|}{$\textbf{u}_\leftstate\xrightarrow{\mathscr{S}^*_1}\textbf{u}_{i_1}\xdashrightarrow{\mathscr{J}_2}\textbf{u}_{i_2} \xrightarrow{\mathscr{C}_3} {\textbf{u}_\rightstate}$}  & \multicolumn{1}{p{1.5cm}|}{ \ref{eq:ice-lens-formation-solution}, \ref{eq:ice-lens-shock-speed-backfill}} &  \ref{fig:ice-lens-formation} & A sudden increase in meltwater flux into a cold firn with a step decrease porosity leading to the formation an impermeable ice lens and a rising perched water table. & \cite{humphrey2021physical} \\ \hline
\end{tabular}
\end{sidewaystable}

\section{Infiltration into a multilayered firn - Validation with numerical simulations}\label{sec:validation}
Here we demonstrate the application of this theory to a realistic multilayered firn leading to the formation of a perched firn aquifer. This final test shows the infiltration into multilayered firn after a melt event combining two cases proposed in Section \ref{sec:melt-transport} that are summarized in Table \ref{table:2} (see Figure~\ref{fig:discussion}\emph{a}). This problem summates the commonly studied wetting front propagation in a temperate region (Case III) with the wetting front in a cold region and the formation of a perched water table (Case XI) that is not studied well in the literature. This problem demonstrates a delay in meltwater ponding at the surface due to a decay in both firn porosity and temperature with depth. The firn is 70\% porous, dry and at 0$^\circ$C, above a dimensionless depth of $\zeta=1$ (Figures~\ref{fig:discussion}\emph{a}). At time $\tau=0$, the meltwater is generated at the surface ($\zeta=0$), which keeps the liquid water content (denoted by $LWC$ or $\phi_w$) to 40\% making the firn very wet at the surface, $\zeta=0$. The analytic solution of this problem can be constructed in the hodograph plane in two parts (Figure~\ref{fig:discussion}\emph{b}). 

First, a wetting front $\mathscr{S}$ initially propagates downwards with a constant dimensionless speed $\Lambda_\mathscr{S}=0.28$, as given in Case III (Figures~\ref{fig:discussion}\emph{c}-\emph{e}, red dashed line). The dimensionless time when the initial front reaches the depth of transition $\zeta_t = 1$ to form a saturated region is $\tau_s = \frac{\zeta_t}{\Lambda_\mathscr{S}}=1/0.28=3.57$, as shown in Case XI. Afterwards the saturated region forms due to large meltwater flux compared to the hydraulic conductivity of the second intermediate region formed due to refreezing below the jump $\zeta>1$, making the medium pressure driven instead and enforcing $q_s(\tau) \leq K_{i_2}$. This saturated region expands in both directions as a perched water table (upper shock shown by green dashed line in Figure~\ref{fig:discussion}) rising to the surface and the wetting front ${\mathscr{S}_3}$ that percolates in the cold region (lower shock shown with blue dashed line) while refreezing a part of meltwater to warm the surrounding snow. Thus, it shows a reduction in porosity from 30\% to 21.2\% for $\zeta>1$ behind the wetting front $\mathscr{S}_3$ at $\tau>\tau_s$ (Figure~\ref{fig:discussion}\emph{c}). The perched water table ${\mathscr{S}^*_1}$ rises upwards with constant dimensionless velocity  $\Lambda_{\mathscr{S}^*_1}=-0.268$ until ponding occurs. Meanwhile, the wetting front keeps percolating in the cold firn with reduced velocity $\Lambda_{\mathscr{S}_3}=0.105$. Lastly, the ponding starts at a time when the rising perched water table reaches at the surface so the dimensionless ponding time can be calculated theoretically as $\tau_p = \zeta_t/\Lambda_\mathscr{S} + (-\zeta_t)/\Lambda_{\mathscr{S}^*_1} = 1/0.28 + (-1)/(-0.268)=7.30$. All of these dimensionless shock speeds and times are computed analytically and the resulting locations are graphed with dashed lines in Figures~\ref{fig:discussion}\emph{c}-\emph{e}. The theoretical results show an excellent comparison with the numerical solutions shown in contour plots. The numerical solutions are obtained by solving the governing model \eqref{eq:dimless-system-gov-eqs-1D} performed in the absence of capillary effects in between water and gas phases as well as heat diffusion along with the treatment in the saturated region given by \cite{shadab2022hyperbolic}. Further, the densities of the water and the ice phases are assumed to be the same. The computational domain $\zeta \in [0,2]$ is divided uniformly into 400 cells. The boundary condition at the top surface ($\zeta=0$) is prescribed as the ``Top condition" in Figure~\ref{fig:discussion}\emph{a} whereas the bottom boundary condition is not required. More details about the numerical model are given in \cite{shadab2023mechanism}. This problem constitutes a very specific benchmark test for firn hydrology simulators that are able to simulate variably saturated flows. It shows how vertical heterogeneity in firn may lead to formation of perched aquifers that can cause ponding at a later stage. It also illustrates that an impermeable layer is not required to cause meltwater perching and subsequently, ponding.

\begin{figure}
    \centering
    \includegraphics[width=0.9\linewidth]{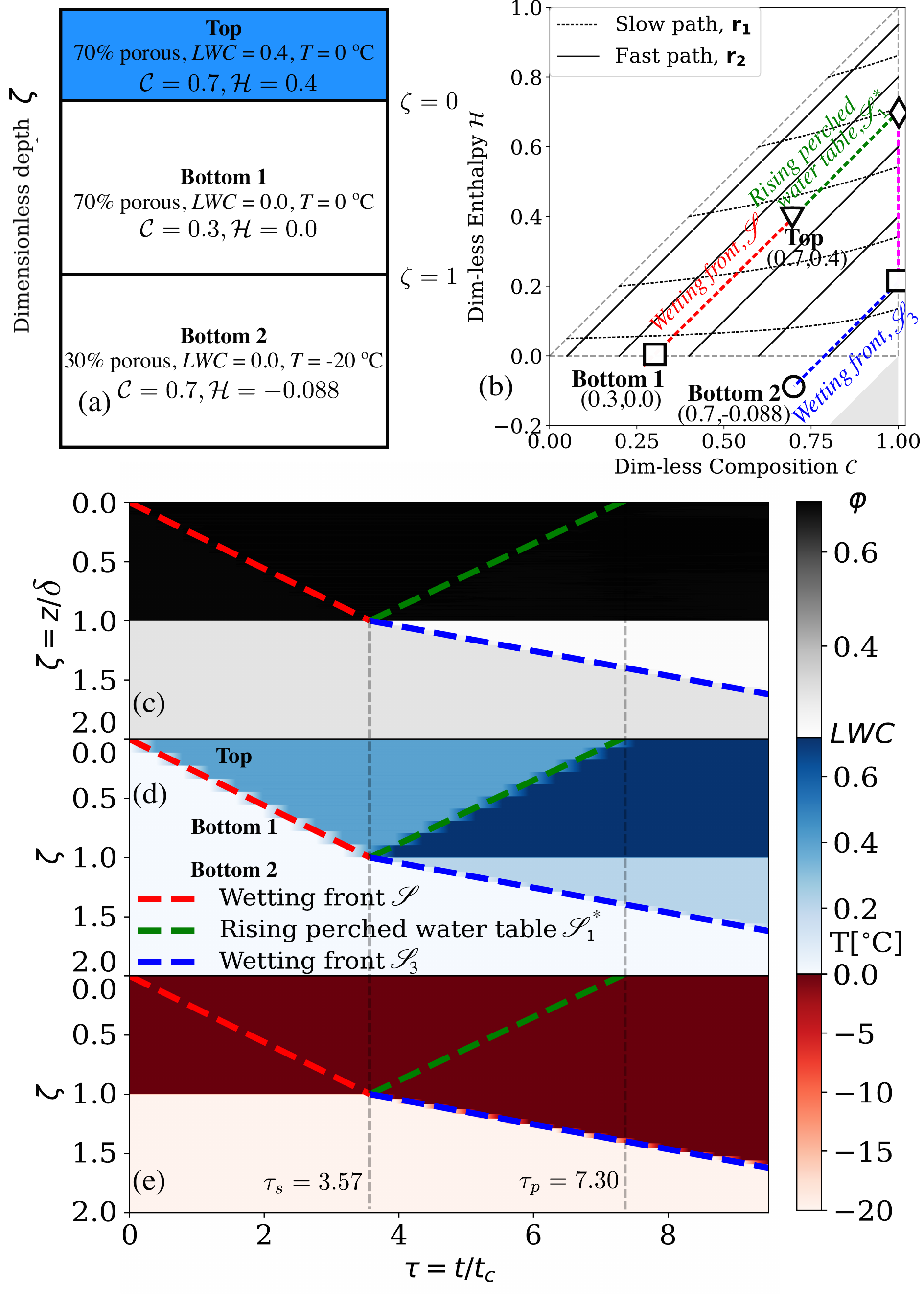}
    \caption{Infiltration into a multilayered firn with porosity and temperature decay with depth: (a) Schematic diagram showing all of the layers (b) Construction of solution in hodograph plane. The contours showing evolution of the firn (c) porosity $\varphi$, (d) liquid water content $LWC$ or volume fraction of water $\phi_w$ and temperature $T$ evaluated by the numerical simulator. Here all dashed lines show analytic solutions computed from the proposed theory. The thin, grey dashed lines show theoretically calculated dimensionless times of saturation $\tau_s$ and ponding $\tau_p$. The theoretical evolution of the initial wetting front $\mathscr{S}$ (red dashed line) is computed from Case III whereas the dynamics of saturated region after wetting front $\mathscr{S}$ reaches $\zeta=1$ shown by blue and green dashed lines is computed by Case XI. Here $\delta$ and $t_c = \delta/K_h$ are characteristic times with former being calculated from their definition \ref{eq:20}. For example, the characteristic depth is $\delta = 5$ m and for $K_h=5.45\times10^{-4}$ m/s (see Table \ref{table:1}), the characteristic time comes out to be $t_c=2.53$ hours. The dimensionless wetting front speeds can be redimensionalized by multiplying with saturated (and no-matrix) hydraulic conductivity $K_h$.}
    \label{fig:discussion}
\end{figure}

\section{Conclusions}\label{sec:conclusion}
This work introduces a unified kinematic wave theory for meltwater infiltration into firn that helps construct analytic solutions in a hodograph plane. The theory neglects heat diffusion and capillary forces while assuming a constant density for water and ice phases. We provide a suite of 12 cases of melt infiltration into firn (except Case VII), inspired by nature, and construct their analytic solution while connecting most of the cases given in the literature. The previous works were predominantly limited to unsaturated wetting fronts in cold and temperate firn. The proposed theory considers the cases that have not been studied previously such as the formation of a perched water table (Case VI and Case XI) and the region of formation of an impermeable ice lens (Case XII). These simple cases can be combined to study a more realistic problem, as was demonstrated by the problem of infiltration into multilayered firn. There are several consequences of this work. First, one can interpret the physics of the meltwater infiltration into firn without running expensive numerical simulations. It will better constrain the process of firn densification and subsequent parameterization for transforming the altimetry data into gravimetry data. Second, these analytic solutions can thus help develop better, cost-effective physics-based firn hydrological models which can then be integrated with ice-sheet and Earth system models. Further, these problems can serve as a benchmark for next generation of wet firn hydrological models which currently show significant deviation due to a lack of benchmark problems. Lastly, from advanced remote sensing and field observation techniques, the proposed cases will help deduce the firn properties such as heterogeneity and percolation parameters such as (relative) permeability exponent. This comprehensive framework can significantly enhance our understanding of wet firn hydrology, a component that has been poorly understood, ultimately aiding in constraining its contribution to surface mass balance loss from glaciers and, consequently, sea-level rise.

\section{Acknowledgements and Funding} 
Support for this work was provided through NASA under Emerging World Grant number NASA $18-$EW$18\_ 2-0027$ and the University of Texas Institute for Geophysics under the Blue Sky Student Fellowship. The authors would like to acknowledge Dr. Surendra Adhikari and Dr. Andreas Colliander for the discussions on meltwater infiltration at the Dye-2 site in Southwest Greenland.

\bibliographystyle{igs}  
\bibliography{igsrefs,marc}   

\appendix

\section{Flux Gradient and Lemma related to eigendecomposition}
The partial derivatives of the flux function given in equation (\ref{eq:dimless-flux-CH-vector}) give:
\begin{align}\label{eq:fCH}
   f_{,\mathcal{H}} &= \frac{\partial f_{ \mathcal{C}}}{\partial \mathcal{H}} = \begin{cases} 0,& \mathcal{H} \leq 0 \\ \mathcal{H}^{n-1} (1-\mathcal{C}+\mathcal{H})^{m-n-1} (n (1-\mathcal{C})+m\mathcal{H}) , & 0 < \mathcal{H}< \mathcal{C}\end{cases} \\
     f_{,\mathcal{C}} &= \frac{\partial f_{\mathcal{H}}}{\partial \mathcal{C}} = \begin{cases} 0,& \mathcal{H} \leq 0 \\ -(m-n)\mathcal{H}^n (1-\mathcal{C}+\mathcal{H})^{m-n-1} , & 0 < \mathcal{H}< \mathcal{C}\end{cases}
\end{align}

\begin{lemma} \textit{Prove that $\lambda_2$ is non-negative and increases monotonically in the direction of integral curves corresponding to the second eigenvector $\textbf{r}_2$ (fast paths) in three phase region, $0<\mathcal{H} < \mathcal{C}$}. \end{lemma}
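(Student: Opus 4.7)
The plan is to verify both claims directly from the closed-form expressions for $\lambda_2$ and $\mathbf{r}_2$ already derived in the main text, exploiting the fact that the second characteristic field is especially simple: its integral curves are lines of constant porosity.

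\textbf{Non-negativity.} In the three-phase region one has $\mathcal{H}>0$ and $\varphi := 1-\mathcal{C}+\mathcal{H} = 1-\phi_i > 0$ (porosity is strictly positive whenever a fluid phase exists). Substituting into $\lambda_2 = n\mathcal{H}^{n-1}(1-\mathcal{C}+\mathcal{H})^{m-n}$ shows every factor is strictly positive for $n\ge 1$, so $\lambda_2 > 0$ throughout the interior of region~2 (and vanishes only on the boundary $\mathcal{H}=0$ if $n>1$).

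\textbf{Monotonicity along the fast path.} Since $\mathbf{r}_2 = [1,1]^T$, motion along an integral curve of the second field satisfies $\d\mathcal{C}/\d\eta = \d\mathcal{H}/\d\eta$, so that $\varphi = 1-\mathcal{C}+\mathcal{H}$ is a conserved quantity along such a curve (this is exactly the observation that the fast paths coincide with constant-porosity contours, already recorded in the excerpt). Consequently, along a fixed fast path, $\lambda_2$ reduces to a function of $\mathcal{H}$ alone with the constant prefactor $\varphi^{m-n}$:
\begin{equation*}
\lambda_2\bigl(\mathcal{H}\bigr) = n\,\varphi^{m-n}\,\mathcal{H}^{n-1}.
\end{equation*}
Differentiating gives $\d\lambda_2/\d\mathcal{H} = n(n-1)\varphi^{m-n}\mathcal{H}^{n-2}$, which is strictly positive whenever $n>1$, $\mathcal{H}>0$, $\varphi>0$. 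Equivalently, the directional derivative along $\mathbf{r}_2$ is
\begin{equation*}
\nabla_{\mathbf{u}}\lambda_2 \cdot \mathbf{r}_2 \;=\; \frac{\partial \lambda_2}{\partial \mathcal{C}} + \frac{\partial \lambda_2}{\partial \mathcal{H}} \;=\; n(n-1)\,\mathcal{H}^{n-2}\,\varphi^{m-n},
\end{equation*}
where the two cross-terms involving $(m-n)$ cancel exactly; this will be the one-line computation that closes the proof.

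\textbf{Main obstacle.} There is no real obstacle here: both assertions reduce to inspecting signs of elementary power expressions. The only subtlety worth mentioning is the role of the hypothesis $n>1$: for $n=1$ the second characteristic field becomes linearly degenerate (constant $\lambda_2$ along the fast path), which is precisely the borderline case excluded by the statement. I would include a brief remark to this effect so that the genuine nonlinearity of the second field in the physical regime $n>1$ (needed later to justify the rarefaction/shock dichotomy used throughout Section~3) is made explicit.
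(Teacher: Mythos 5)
Your proposal is correct and follows essentially the same route as the paper's proof: positivity from inspecting the factors of $\lambda_2 = n\mathcal{H}^{n-1}(1-\mathcal{C}+\mathcal{H})^{m-n}$, and monotonicity by restricting to the fast path where $1-\mathcal{C}+\mathcal{H}$ (the porosity) is constant and differentiating with respect to $\mathcal{H}$ to get $n(n-1)\mathcal{H}^{n-2}\varphi^{m-n}>0$ for $n>1$. Your added check that the $(m-n)$ cross-terms cancel in $\nabla_{\textbf{u}}\lambda_2\cdot\textbf{r}_2$ is just an equivalent way of stating the same computation.
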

    \begin{proof} Since $\mathcal{H}>0$, $-\mathcal{C}< \mathcal{H}-\mathcal{C}$, and $0 \leq \mathcal{C} < 1$,
    \begin{align}
        \lambda_2 &=n\mathcal{H}^{n-1} (1-\mathcal{C}+\mathcal{H})^{m-n}
        > 0
    \end{align}
    Along the fast path, $\mathcal{C}=\mathcal{H}+\mathfrak{C}$, 
     \begin{align}
        \lambda_2 &=n\mathcal{H}^{n-1} (1-\mathcal{C}+\mathcal{H})^{m-n}\\
        &= n\mathcal{H}^{n-1} (1-(\mathcal{H}+\mathfrak{C})+\mathcal{H})^{m-n} \\
        &= n\mathcal{H}^{n-1} (1+\mathfrak{C})^{m-n}
    \end{align}   
Now, taking the derivative with respect to $\mathcal{H}$, we get
\begin{align}
    \frac{\d{\lambda}_2}{\d\mathcal{H}} &= n(n-1)\mathcal{H}^{n-2} (1+\mathfrak{C})^{m-n}
\end{align}
As $\mathcal{H}>0$ and $\mathfrak{C}=\mathcal{C}-\mathcal{H}<1$ in three phase region,
\begin{align}
    \frac{\d{\lambda}_2}{\d\mathcal{H}} >0, \quad \textrm{ for } n>1
\end{align}
    Therefore, $\lambda_2$ is non-negative and increases monotonically in the direction of $\textbf{r}_2$ when $n>1$. \end{proof}

    \begin{lemma} \textit{Prove that fast paths are parallel to constant porosity $\varphi$ contours in three phase region, $0<\mathcal{H} \leq \mathcal{C}$.}\end{lemma}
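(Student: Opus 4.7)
The plan is to prove the claim directly by computing the porosity along a fast path and showing it is constant. Since the eigenvector $\textbf{r}_2 = [1,1]^T$ is constant, the fast paths are straight lines in the $\mathcal{C}\mathcal{H}$ hodograph plane, so showing they coincide with (in particular, are parallel to) the constant-porosity contours reduces to comparing the corresponding algebraic expressions.

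First, I would recall the explicit formula for porosity in the three-phase region from Equation \eqref{eq:27}, namely $\varphi(\mathcal{C},\mathcal{H}) = 1 - \mathcal{C} + \mathcal{H}$ for $0 < \mathcal{H} < \mathcal{C}$. Next, I would invoke the fast-path relation \eqref{eq:fast-path}, $\mathcal{C} = \mathcal{H} + \mathfrak{C}$, where $\mathfrak{C}$ is the constant of integration. Substituting the fast-path relation into the porosity expression gives
\begin{equation}
\varphi = 1 - (\mathcal{H} + \mathfrak{C}) + \mathcal{H} = 1 - \mathfrak{C},
\end{equation}
which is independent of $\mathcal{H}$. Thus $\varphi$ is constant along any fast path, establishing that each integral curve of $\textbf{r}_2$ in region 2 is a level set of $\varphi$.

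As a cross-check, I would verify this infinitesimally by computing the directional derivative of $\varphi$ along $\textbf{r}_2$: since $\nabla_{\textbf{u}} \varphi = [-1, 1]^T$ and $\textbf{r}_2 = [1, 1]^T$, we have $\textbf{r}_2 \cdot \nabla_{\textbf{u}} \varphi = -1 + 1 = 0$, confirming that the fast-path direction is tangent to the level sets of $\varphi$. I do not anticipate a real obstacle here: this lemma is essentially a consistency statement that falls out immediately once the porosity formula and the integrated form of $\textbf{r}_2$ are both in hand. The only minor subtlety is to emphasize that the statement holds pointwise throughout the open three-phase region $0 < \mathcal{H} < \mathcal{C}$, which is precisely where the porosity formula $\varphi = 1 - \mathcal{C} + \mathcal{H}$ is valid.
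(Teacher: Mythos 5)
Your proposal is correct and follows essentially the same route as the paper: both rest on the porosity formula $\varphi = 1-\mathcal{C}+\mathcal{H}$ in the three-phase region and the fast-path relation $\mathcal{C}=\mathcal{H}+\mathfrak{C}$, the only difference being that the paper differentiates $\varphi=\mathrm{const}$ to recover the fast path while you substitute the fast path into $\varphi$ to see it is constant. The directional-derivative check $\textbf{r}_2\cdot\nabla_{\textbf{u}}\varphi=0$ is a fine, equivalent restatement.
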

    \begin{proof} In the three-phase region, the porosity $\varphi=\phi_w+\phi_g=\mathcal{H}+(1-\mathcal{C})$. For a constant porosity $\d \varphi=0$, therefore
    \begin{align}
        \d\varphi&=\d(\mathcal{H}+(1-\mathcal{C}))\\
        0&=\d\mathcal{H}-\d \mathcal{C}\\
        1&=\frac{\d\mathcal{H}}{\d\mathcal{C}}\\
\mathcal{C} &= \mathcal{H} + \mathfrak{C} 
    \end{align}
      where $\mathfrak{C}$ again is an integration constant. Therefore, the constant porosity $\varphi$ contours are the integral curves corresponding to second eigenvector $\textbf{r}_2$ (fast paths) in three-phase region. \end{proof}

\end{document}